\documentclass[12pt,authoryear]{elsarticle}

% --- Packages and general settings ---
\usepackage[utf8]{inputenc}
\usepackage[T1]{fontenc}
\usepackage{lmodern}
\usepackage[a4paper, margin=1in]{geometry}

% --- Math & tables ---
\usepackage{amsmath, amssymb, amsfonts, bm}
\usepackage{array, booktabs, makecell, siunitx, xcolor}
\sisetup{mode=match, detect-all, propagate-math-font=true, table-format=1.3}
\DeclareSIUnit{\bp}{bp}
\usepackage{enumitem}

% --- Figures & graphics ---
\usepackage{graphicx}
\usepackage{float}      
\usepackage{placeins}   
\usepackage{caption}
\usepackage{subcaption}
\usepackage{algorithm}
\usepackage{algorithmic}
\usepackage[table]{xcolor}
\usepackage[authoryear]{natbib}
\usepackage{setspace}
\onehalfspacing

% for overfull box
\emergencystretch=2em

% --- TikZ & plots (optional for appendices) ---
\usepackage{tikz}
\usetikzlibrary{arrows.meta,calc}
\usepackage{pgfplots}
\pgfplotsset{compat=1.18}

% --- Bibliography & links ---
\usepackage[hidelinks]{hyperref}
\usepackage{xurl}
\bibliographystyle{elsarticle-harv}

% --- Theorems & algorithms ---
\usepackage{amsthm}
\newtheorem{proposition}{Proposition}
\newtheorem{lemma}{Lemma}
\newtheorem{remark}{Remark}

\usepackage{algorithm}
\usepackage{algorithmic}

% --- Custom macros ---
\newcommand{\good}[1]{\cellcolor{green!20}#1}
\newcommand{\bad}[1]{\cellcolor{red!15}#1}

\begin{document}

\begin{frontmatter}

\title{Revisiting the Structure of Trend Premia: When Diversification Hides Redundancy}

\author[1,2]{Alban Etienne}
\author[1]{Jean-Jacques Ohana}
\ead{jean-jacques.ohana@aiforalpha.com}
\author[1,3]{Eric Benhamou}
\author[1]{Béatrice Guez}
\author[1]{Ethan Setrouk}
\author[1]{Thomas Jacquot}

\address[1]{Ai For Alpha, France and USA} 
\address[2]{Centrale Lyon, France} 
\address[3]{Université Paris Dauphine–PSL}

\begin{abstract}
Recent work has emphasized the diversification benefits of combining trend signals across multiple horizons, with the medium-term window—typically six months to one year—long viewed as the “sweet spot” of trend-following. This paper revisits this conventional view by reallocating exposure dynamically across horizons using a Bayesian optimization framework designed to learn the optimal weights assigned to each trend horizon at the asset level. The common practice of equal weighting implicitly assumes that all assets benefit equally from all horizons; we show that this assumption is both theoretically and empirically suboptimal. We first optimize the horizon-level weights at the asset level to maximize the informativeness of trend signals before applying Bayesian graphical models—with sparsity and turnover control—to allocate dynamically across assets. The key finding is that the medium‑term band contributes little incremental performance or diversification once short‑ and long‑term components are included. Removing the 125-day layer improves Sharpe ratios and drawdown efficiency while maintaining benchmark correlation. We then rationalize this outcome through a minimum-variance formulation, showing that the medium-term horizon largely overlaps with its neighboring horizons. The resulting “barbell” structure—combining short- and long-term trends—captures most of the performance while reducing model complexity. This result challenges the common belief that more horizons always improve diversification and suggests that some forms of time-scale diversification may conceal unnecessary redundancy in trend premia.
\end{abstract}

\begin{keyword}
Trend premia \sep managed futures \sep systematic investing \sep time-scale decomposition \sep performance attribution \sep portfolio diversification

\vspace{0.5em}
\textit{JEL classification:} G11 \sep G12 \sep C58
\end{keyword}

\end{frontmatter}

\clearpage 
\tableofcontents

\section{Introduction}\label{sec:Introduction}
\subsection{Motivations}
Trend-following strategies are among the most persistent and well-documented sources of excess returns in financial markets. Managed futures funds, or Commodity Trading Advisors (CTAs), systematically exploit directional persistence in prices across equities, bonds, commodities, and currencies through long–short positions in liquid futures. These strategies have been shown to deliver returns that are largely uncorrelated with traditional risk premia and that tend to perform strongly during equity market drawdowns, a property often described as “crisis alpha” (\citealp{MoskowitzOoiPedersen2012}; \citealp{Hurst2017}). As a result, CTAs have become a cornerstone of institutional portfolios and a central object of study in the literature on alternative risk premia and dynamic asset allocation.

A defining feature of most trend-following systems is the use of multiple lookback horizons to estimate and trade price trends. Combining short-, medium-, and long-term signals is widely viewed as a form of time-scale diversification that improves robustness to different market environments (\citealp{BaltasKosowski2013Momentum}; \citealp{Baz2015DissectingInvestmentStrategies}; \citealp{Lemperiere2017}). The medium-term component—typically corresponding to three to six months—has long been regarded as one of the most effective horizons, balancing responsiveness to new information with resilience to noise. Indeed, both academic research and practitioner studies have historically identified this range as the “sweet spot” of trend-following efficiency (\citealp{Fung2001risk}; \citealp{Winton2013}; \citealp{Hurst2017}). 

However, the conventional practice of assigning equal weights to trend horizons implicitly assumes that all assets benefit equally from each horizon. This assumption overlooks the heterogeneity of asset-specific trend dynamics: certain instruments exhibit stronger persistence at short or long horizons, while others display limited predictability across intermediate scales. As a result, equal weighting across horizons may dilute informative signals and reduce overall efficiency. 

In this paper, we address this limitation by dynamically allocating the weights assigned to each trend horizon through Bayesian optimization. The objective is to identify, at the asset level, the combination of horizons that maximizes the informativeness of the trend signals used in the decoding process. This pre-decoding optimization aims to make the input layer itself more predictive, rather than relying solely on post-decoding model adjustments. The Bayesian framework provides a natural mechanism for balancing exploration and regularization, while the inclusion of sparsity and persistence constraints ensures stability through time and prevents overfitting.

After optimizing the weights dynamically across horizons (20, 60, 125, 250 and 500 days), we find that the medium-term layer (125 days) contributes very little to performance or diversification once short- and long-term components are included. Excluding this layer consistently improves Sharpe ratios and drawdown-adjusted performance while maintaining benchmark correlation. The optimal configuration thus takes a “barbell” form: short-term horizons provide convexity and reactivity, while long-term horizons capture persistent macroeconomic trends. The medium-term component, by contrast, appears redundant—largely explained by adjacent horizons—and can be removed without loss of performance.

These findings challenge the traditional view that adding more horizons necessarily enhances robustness. Instead, our results suggest that excessive layering across similar time scales may conceal structural redundancy, creating the illusion of diversification while introducing unnecessary complexity. By systematically testing this hypothesis through a Bayesian dynamic allocation framework, we provide empirical evidence that true diversification arises from combining distinct, complementary horizons rather than overlapping ones. 
A stylized toy model further illustrates the intuition behind this redundancy: the medium-term component can be interpreted as a linear combination of short- and long-term trends. Consequently, its inclusion contributes little incremental alpha and may even impair performance, as shown in the model assuming a symmetric Toeplitz correlation structure among horizons.

\subsection{Structure of the Paper}

The remainder of the paper is organized as follows.  
Section~\ref{sec:literature} reviews the empirical and theoretical literature on trend premia, emphasizing the historical role of the medium-term horizon and the debate between diversification and redundancy across time scales.  
Section~\ref{sec:theory} develops the theoretical framework for optimal allocation across trend horizons, formalizing the problem as a mean–variance optimization among correlated horizon-specific factors and deriving analytical conditions under which the medium-term horizon becomes redundant, leading to a barbell structure between short and long horizons.  
Section~\ref{sec:methodology} introduces the Bayesian graphical model that links portfolio returns to asset-level trend scores, and presents the decomposition of each trend score across multiple horizons. This section establishes the two-layer structure distinguishing the estimation of time-varying asset exposures from the construction of horizon-level inputs.  
Section~\ref{sec:results} details the methodology and out-of-sample results of the dynamic optimization of horizon weights, which replaces the naive equal-weighting scheme by an adaptive, data-driven allocation.  
Section~\ref{sec:Excluding} examines the empirical performance of a model that excludes the medium-term horizon and equally weights the remaining four horizons, providing a robustness check on the contribution of intermediate time scales.  
Section~\ref{sec:discussion} discusses the broader implications of these results for diversification theory and systematic trend allocation, highlighting how the findings challenge the assumption that including more horizons necessarily enhances robustness.  
Finally, Section~\ref{sec:conclusion} concludes by summarizing the main insights and outlining directions for future research on the dynamic structure of trend premia and cross-horizon dependencies.

\section{Background and Literature Review}\label{sec:literature}

\subsection{Empirical Foundations of Trend Premia}
A large body of work has examined the performance of trend-following strategies across different time horizons. Early studies established that price trends represent a pervasive and persistent feature of financial markets (\citealp{Fung2001risk}; \citealp{Hurst2017}). Within this literature, the medium-term horizon—typically spanning six months to one year—has long been viewed as the most effective range for capturing the trend premium. Empirical evidence supports this view. For instance, \citet{Winton2013} reports that between 1984 and 2013, an “intermediate-speed” strategy with a lookback of several months achieved a Sharpe ratio of 1.12, compared with 0.87 and 0.81 for short- and long-term strategies, respectively. Similarly, \citet{Hurst2017} find that twelve-month momentum strategies deliver strong and stable returns over more than a century of global data, while shorter-term implementations tend to be noisier and less persistent.

However, the relationship between signal horizon and performance is complex and time-varying (\citealp{ShiLian2025}). Several studies document that very short-term trend strategies—typically based on signals of a few days or weeks—have experienced a progressive decline in effectiveness as markets have become more informationally efficient (\citealp{Man2025}; \citealp{Baz2015DissectingInvestmentStrategies}). This decline is consistent with the semi-strong form of market efficiency, which suggests that information is quickly incorporated into prices, leaving limited scope for exploiting short-lived price continuation. Conversely, very long-term signals, extending beyond one year, tend to suffer from lag effects and partial mean reversion, leading to delayed entries and missed reversals (\citealp{MoskowitzOoiPedersen2012}; \citealp{martin2012momentum}; \citealp{Larry_Swedroe_2022};). Taken together, these findings imply that the most extreme horizons—very short or very long—offer limited standalone value, while intermediate horizons historically provided a more balanced risk–return profile.

\subsection{Diversification across time scales}
Beyond the performance of individual horizons, much of the literature emphasizes the benefits of combining multiple lookback windows within a single trend-following system. Time-scale diversification—mixing fast, medium, and slow signals—is widely regarded as a key source of robustness across market regimes (\citealp{BaltasKosowski2013Momentum}; \citealp{Lemperiere2017}; \citealp{Baz2015DissectingInvestmentStrategies}). Empirically, correlations between signals at different horizons are often modest, providing diversification benefits similar to those observed across asset classes. For example, \citet{Man2023} document a correlation of only 0.17 between very fast and very slow trend models within their multi-horizon CTA framework. This low correlation helps smooth performance over time, especially during volatile periods or abrupt regime shifts.

Despite this established consensus, relatively few studies have examined whether the contribution of the medium-term component remains distinct once short- and long-term signals are jointly used. In practice, most large CTAs implement a dense grid of lookback windows to cover a broad spectrum of trend speeds (\citealp{DolfinMaxey2017}). Yet it remains unclear whether each layer provides independent diversification or whether some horizons are redundant—capturing information already embedded in adjacent ones. This open question lies at the intersection of two strands of literature: the study of trend premia as a systematic return source, and the broader portfolio-theoretic discussion of how diversification interacts with redundancy in correlated strategies.

The present paper contributes to this discussion by systematically isolating the marginal value of the medium-term horizon in multi-horizon trend-following strategies. By decomposing trend premia into five distinct lookback components and evaluating the impact of removing the 125-day layer, we provide new evidence on whether time-scale diversification genuinely enhances performance or simply overlays similar exposures across adjacent horizons.

\subsection{Redundancy and Horizon Selection in Trend Systems}
While many trend-following systems combine a dense grid of lookback windows, several studies suggest that most of the diversification benefit can be achieved with only a small number of distinct horizons. \citet{Larry_Swedroe_2022} show that a dynamic allocation switching between short- and long-term signals—corresponding to one-month and twelve-month lookbacks—outperforms both static combinations and any single signal in isolation. Likewise, \citet{GouldingHarveyMazzoleni2023} construct an intermediate indicator based on a time-varying combination of only two horizons, demonstrating that a parsimonious design can capture much of the performance traditionally attributed to a full spectrum of speeds. These results imply that the incremental value of intermediate horizons is limited once the extremes are represented. Complementary evidence from \citet{BenhamouAugmentedST} further shows that short-term signals play an essential role in mitigating drawdowns and capturing early trend reversals, while multi-horizon combinations outperform single-horizon strategies by providing stability across market regimes. Together, these studies highlight that effective time-scale diversification may require far fewer layers than conventionally assumed, raising questions about the true necessity of the medium-term component.

\paragraph{Limitations of the literature and contribution of this study}
Despite a rich literature on trend-following and time-horizon diversification, one critical question remains open: \emph{can horizon exposures be dynamically adjusted over time to improve robustness, rather than relying on static equal weighting?}
Most academic and practitioner research—such as that of AlphaSimplex, AQR, and Man Group—advocates combining multiple horizons, typically using fixed or equal weighting schemes to mitigate overfitting and ensure model stability. However, such static allocations implicitly assume that the relative importance of each horizon is constant across time and assets. This assumption neglects the time-varying nature of market regimes and the evolving persistence of price trends.
\citet{GurnaniHentschel2021} highlight the role of short-term components in preserving convexity, yet the potential redundancy of the medium-term layer—and, more generally, the lack of adaptivity in horizon allocation—has not been explicitly addressed.
To our knowledge, no prior study has implemented a systematic, out-of-sample framework allowing for \emph{dynamic, Bayesian reallocation of weights across horizons} to test whether the medium-term component continues to add independent value once short- and long-term signals are optimally adjusted through time.

\subsection{Contribution of the paper}
This paper addresses this gap by focusing on the construction of asset-level trend scores across multiple horizons and their aggregation into portfolio returns. The results obtained from the dynamic allocation of horizon weights, together with theoretical considerations, indicated that the medium-term horizon contributed little to overall performance. This observation motivated the development of a strategy in which the 125-day component is systematically excluded and the remaining horizons equally weighted, allowing for an assessment of the impact on Sharpe ratio, drawdowns, and diversification. In both approaches, the resulting asset-level scores are then input into a rolling Bayesian graphical model that allocates exposures across assets. The findings demonstrate that equal weighting across horizons is suboptimal and that excluding the medium-term horizon produces a simpler, more efficient two-scale structure that captures the essence of trend-following across markets and regimes.

\section{Theoretical Framework: Optimal Allocation Across Horizons}\label{sec:theory}
\subsection{Trend Horizons as Correlated Factors}

The allocation of weights across trend horizons can be formalized in a mean–variance framework analogous to Markowitz’s portfolio theory. Each horizon specific trend signal can be viewed as a distinct, yet correlated, return factor that captures price persistence over a particular time scale. From this perspective, optimizing horizon weights is equivalent to constructing an efficient portfolio of trend factors—balancing the trade-off between expected return, volatility, and diversification. This formulation provides a theoretical foundation for assessing the marginal contribution of each horizon to overall performance and for determining whether certain horizons, such as the medium-term component, offer unique diversification benefits or merely replicate information embedded in neighboring horizons.

\subsection{Minimum-Variance Allocation and Economic Interpretation}
A classical approach to quantifying the contribution of each trend horizon is to treat single-horizon strategies as individual assets within a minimum-variance portfolio. In this setting, each “asset” corresponds to a trend-following strategy based solely on signals from a given lookback window—20, 60, 125, 250, or 500 trading days. The return of each single-horizon strategy thus represents the performance that would be achieved by following trends at one particular time scale in isolation. 

By estimating the covariance matrix of these horizon-specific returns, the optimization identifies the set of weights that minimizes overall portfolio variance subject to a full-investment constraint. This minimum-variance solution reveals how diversification operates across horizons: horizons that are highly correlated with others or contribute little specific variance receive smaller or zero weights, while those that add independent sources of return volatility receive larger allocations. In this framework, a horizon’s optimal weight reflects its incremental economic value to the portfolio of trend signals, providing a rigorous and interpretable measure of redundancy or distinctiveness among time scales.

\subsection{Minimum-Variance Allocation Across Horizons}
Let $r = (r_1, r_2, \dots, r_H)^\top$ denote the expected returns of the $H$ single-horizon strategies and  $\Sigma$ denote their covariance matrix. We can formulate the standard minimum-variance portfolio as follows

\begin{proposition}[Minimum-variance portfolio with full investment]\label{prop:min_variance_portfolio}
Let $\Sigma \in \mathbb{R}^{H\times H}$ be symmetric positive definite and $\mathbf{1}\in\mathbb{R}^H$ the vector of ones.
The optimization problem
\begin{align}
\min_{w \in \mathbb{R}^H} \quad & w^\top \Sigma w, \\
\text{s.t.} \quad & w^\top \mathbf{1} = 1,
\end{align}
has the unique solution
\begin{equation}
w^\star = \frac{\Sigma^{-1}\mathbf{1}}{\mathbf{1}^\top \Sigma^{-1}\mathbf{1}}.
\end{equation}
\end{proposition}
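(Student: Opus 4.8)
The plan is to treat this as a convex quadratic program with a single affine equality constraint and solve it by the method of Lagrange multipliers, then upgrade the resulting stationarity condition to a genuine global minimum using strict convexity. First I would form the Lagrangian $L(w,\lambda) = w^\top \Sigma w - \lambda\,(w^\top \mathbf{1} - 1)$ and impose the first-order condition $\nabla_w L = 2\Sigma w - \lambda \mathbf{1} = 0$. Since $\Sigma$ is symmetric positive definite it is invertible, so this yields $w = \tfrac{\lambda}{2}\,\Sigma^{-1}\mathbf{1}$. Substituting into the constraint $w^\top \mathbf{1} = 1$ gives $\tfrac{\lambda}{2}\,\mathbf{1}^\top \Sigma^{-1}\mathbf{1} = 1$, and because $\Sigma^{-1}$ is also positive definite and $\mathbf{1}\neq 0$, the scalar $\mathbf{1}^\top \Sigma^{-1}\mathbf{1}$ is strictly positive, so we may solve $\tfrac{\lambda}{2} = 1/(\mathbf{1}^\top \Sigma^{-1}\mathbf{1})$ and recover the candidate $w^\star = \Sigma^{-1}\mathbf{1}/(\mathbf{1}^\top \Sigma^{-1}\mathbf{1})$.

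The second part is to confirm this candidate is the unique global minimizer rather than merely a stationary point. The cleanest route is a completing-the-square argument: any feasible $w$ can be written as $w = w^\star + d$ with $d^\top \mathbf{1} = 0$, since both $w$ and $w^\star$ satisfy $w^\top\mathbf{1}=1$. Expanding,
\begin{equation}
w^\top \Sigma w = (w^\star)^\top \Sigma w^\star + 2\,(w^\star)^\top \Sigma d + d^\top \Sigma d,
\end{equation}
and the cross term vanishes because $(w^\star)^\top \Sigma d = (\mathbf{1}^\top \Sigma^{-1}\mathbf{1})^{-1}\,\mathbf{1}^\top \Sigma^{-1}\Sigma d = (\mathbf{1}^\top \Sigma^{-1}\mathbf{1})^{-1}\,\mathbf{1}^\top d = 0$. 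Hence $w^\top \Sigma w = (w^\star)^\top \Sigma w^\star + d^\top \Sigma d \ge (w^\star)^\top \Sigma w^\star$, with equality if and only if $d^\top \Sigma d = 0$, which by positive definiteness of $\Sigma$ forces $d = 0$, i.e. $w = w^\star$. This simultaneously establishes optimality and uniqueness.

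Honestly there is no serious obstacle here — the result is classical — so the only things to be careful about are (i) invoking positive definiteness of $\Sigma$ (hence of $\Sigma^{-1}$) at the two places it is genuinely needed: to guarantee $\Sigma^{-1}$ exists, and to guarantee $\mathbf{1}^\top\Sigma^{-1}\mathbf{1}>0$ so the normalization is well defined; and (ii) not stopping at the Lagrange stationarity condition but closing the argument for global optimality and uniqueness, which the completing-the-square step handles without any appeal to second-order conditions or constraint qualifications. I would present the Lagrangian derivation as motivation for the formula and the completing-the-square computation as the actual proof.
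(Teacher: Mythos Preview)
Your proof is correct and follows essentially the same Lagrangian route as the paper's primary proof. Your completing-the-square verification of global optimality is simply a more explicit rendering of the paper's one-line appeal to strict convexity on an affine feasible set, and it coincides with the $\Sigma$-orthogonal projection characterization the paper records separately as a lemma.
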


\begin{proof}
Refer to Proof~\ref{appendix:proof_meanvariance} for technical details, which is a reformulation or direct consequences of classical arguments in mean–variance analysis and convex optimization as presented in ~\cite{Markowitz1952} and ~\cite{BoydVandenberghe2004}.
\end{proof}
This solution defines the combination of horizon-specific trend strategies that minimizes portfolio variance subject to full investment. The relative weights $w^\star$ are proportional to the inverse of each horizon’s contribution to total portfolio risk, adjusted for cross-horizon correlations. In this framework, each horizon is treated as a factor that contributes to overall return variance, allowing its incremental value to be assessed in a unified, covariance-based setting.

\paragraph{Economic Interpretation}
The minimum-variance allocation provides a natural diagnostic for identifying redundant horizons. A horizon receiving a zero or negative optimal weight offers little or no incremental diversification, as its return pattern can be replicated by a linear combination of other horizons. Conversely, horizons with large positive weights capture distinct risk exposures or low correlations with other trend components. This perspective interprets the weight structure not merely as a mathematical outcome but as an economic measure of each horizon’s marginal contribution to the efficiency of the overall trend portfolio.

\paragraph{Economic Intuition for Excluding the Medium-Term Horizon}
The optimization problem can be simplified by aggregating the five original horizons into three representative components—short, medium, and long. Empirically, the medium-term horizon is highly correlated with its neighboring short- and long-term signals. When this correlation approaches the level of near collinearity, the medium-term component provides little incremental diversification. Its risk–return profile can be effectively replicated by a convex combination of the two extremes. Consequently, the optimal allocation tends to concentrate on the short- and long-term horizons, forming a barbell structure across time scales. This pattern reflects an economic rather than a purely statistical outcome: the medium-term horizon is redundant because it captures trends already embedded in the faster and slower components.

\subsection{A Stylized Three-Horizon Model and the Barbell Allocation}

To formalize this intuition, consider three representative trend strategies—short ($T_1$), medium ($T_2$), and long ($T_3$)—each with identical expected excess returns $\mu>0$ and volatility $\sigma>0$. Assume that the correlation matrix among these horizons follows a symmetric Toeplitz form and is given by

\[
R(\rho,\delta) =
\begin{pmatrix}
1 & \rho & \delta \\
\rho & 1 & \rho \\
\delta & \rho & 1
\end{pmatrix}, \qquad \delta \in [0,1), \quad \rho \in (0,1),
\]

where $\rho$ denotes the correlation between adjacent horizons (short-medium or medium-long), and $\delta$ the correlation between the two extremes. Portfolio weights $w = (w_1,w_2,w_3)^\top$ satisfy the non-negativity and budget constraints $w_i \ge 0$ and $w_1+w_2+w_3=1$. 

The covariance matrix of horizon returns is $\Sigma = \sigma^2 R(\rho,\delta)$, and total portfolio variance is
\[
\sigma_p^2 = w^\top \Sigma w = \sigma^2 w^\top R w.
\]
Maximizing the Sharpe ratio, assuming a zero risk-free rate, is equivalent to solving

\[
\min_{w \ge 0,\, 1^\top w = 1} \; w^\top R w.
\]

\begin{proposition}[Exclusion of the Intermediate Horizon] \label{prop:barbell}
Let $R(\rho,\delta)$ be defined as above. Then $R(\rho,\delta) \succ 0$ if and only if $\rho < \sqrt{(1+\delta)/2}$. Moreover, when $\rho \ge (1+\delta)/2$, the unique minimum-variance allocation is the barbell portfolio
\[
w^\star = \left( \tfrac{1}{2},\, 0,\, \tfrac{1}{2} \right),
\]
with minimum portfolio variance
\[
\sigma_p^{\star 2} = \sigma^2 \tfrac{1+\delta}{2}, \quad 
\sigma_p^\star = \sigma \sqrt{\tfrac{1+\delta}{2}}.
\]
\end{proposition}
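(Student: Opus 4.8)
The plan is to prove the two assertions separately: the positive-definiteness criterion via Sylvester's criterion, and the barbell allocation via a symmetry-reducing change of variables that turns the constrained problem into a one-dimensional quadratic minimization. For the first part I would compute the leading principal minors of $R(\rho,\delta)$. The $1\times1$ and $2\times2$ minors are $1$ and $1-\rho^2$, both positive since $\rho\in(0,1)$; expanding the $3\times3$ determinant and factoring gives
\[
\det R(\rho,\delta) = (1-\delta)\bigl[(1+\delta)-2\rho^2\bigr].
\]
Because $1-\delta>0$ for $\delta\in[0,1)$, this is positive precisely when $\rho^2<(1+\delta)/2$, i.e. $\rho<\sqrt{(1+\delta)/2}$, which is the stated equivalence.

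For the barbell part, the key idea is to exploit the invariance of $w^\top R w$ under the swap $w_1\leftrightarrow w_3$. I would set $a:=w_1+w_3=1-w_2$ and $b:=w_1-w_3$, so that $w_1=(a+b)/2$, $w_3=(a-b)/2$, $w_2=1-a$, and the simplex $\{w\ge 0,\ \mathbf{1}^\top w=1\}$ becomes $\{0\le a\le 1,\ |b|\le a\}$. Substituting and collecting terms yields the exact decomposition
\[
w^\top R(\rho,\delta)\,w = \frac{1-\delta}{2}\,b^2 + g(a),\qquad g(a):=\frac{3+\delta-4\rho}{2}\,a^2 + 2(\rho-1)\,a + 1 .
\]
Since $\delta<1$, the coefficient of $b^2$ is strictly positive and $b=0$ is feasible whenever $a\ge 0$, so for each fixed $a$ the minimum over $b$ is attained uniquely at $b=0$. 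The problem therefore collapses to minimizing the scalar quadratic $g$ on $[0,1]$.

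To finish I would analyze $g$ according to the sign of its leading coefficient $c:=(3+\delta-4\rho)/2$. If $c>0$, $g$ is a convex parabola with vertex $a^\star=\tfrac{2(1-\rho)}{3+\delta-4\rho}$, and clearing the positive denominator shows $a^\star\ge 1\iff\rho\ge(1+\delta)/2$; under the hypothesis $g$ is then strictly decreasing on $[0,1]$ with unique minimizer $a=1$. If $c\le 0$, $g$ is concave or affine, hence minimized at an endpoint, and $g(1)=(1+\delta)/2<1=g(0)$ using $\delta<1$, while $\rho<1$ rules out a tie; so $a=1$ is again the unique minimizer. Recombining with the $b=0$ reduction, the unique optimum on the simplex is $(a,b)=(1,0)$, i.e. $w^\star=(\tfrac12,0,\tfrac12)$, with objective value $g(1)=(1+\delta)/2$, hence $\sigma_p^{\star 2}=\sigma^2(1+\delta)/2$ and $\sigma_p^\star=\sigma\sqrt{(1+\delta)/2}$.

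The step I expect to be the main obstacle is the case bookkeeping in the final paragraph together with the \emph{global} uniqueness claim: one must pin down the vertex $a^\star$, verify the equivalence $a^\star\ge 1\iff\rho\ge(1+\delta)/2$ with the correct inequality direction (which relies on $c>0$), dispose of the degenerate affine case $c=0$, and then argue that strict uniqueness in $b$ combined with strict uniqueness in $a$ forces a unique optimizer on the whole simplex. A secondary subtlety worth flagging is that $\rho\ge(1+\delta)/2$ does not by itself make $R$ indefinite—since $(1+\delta)/2\le\sqrt{(1+\delta)/2}$—so the argument cannot invoke convexity of $w^\top R w$; minimizing directly on the compact simplex, as above, sidesteps the issue.
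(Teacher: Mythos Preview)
Your proof is correct and follows essentially the same route as the paper: Sylvester's criterion for the positive-definiteness claim, a symmetry reduction to a one-variable quadratic (the paper sets $w_1=w_3=w$ directly, while you parametrize by $a=w_1+w_3$ and $b=w_1-w_3$, which amounts to the same quadratic under $a=2w$), and the same two-case analysis on the sign of the leading coefficient $3+\delta-4\rho$. Your $(a,b)$ decomposition is in fact a slight improvement in rigor, since it \emph{proves} that the optimum satisfies $w_1=w_3$ via the strictly positive $b^2$ term, whereas the paper simply asserts this by symmetry---an assertion that, absent a priori uniqueness, is not fully justified.
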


\begin{proof} See ~\ref{proof:barbell}
\end{proof}

\paragraph{Interpretation}
Proposition~\ref{prop:barbell} demonstrates that when adjacent trend horizons exhibit high pairwise correlation relative to the correlation between the extremes, the intermediate horizon offers no incremental diversification. In such environments, the minimum-variance frontier is spanned by the short and long horizons alone, yielding a barbell allocation across time scales. The result provides a theoretical foundation for our empirical observation that excluding the medium-term horizon leaves performance largely unaffected—and in several cases enhances efficiency by removing redundant exposures. Conceptually, the medium-term horizon functions as a statistical intermediary rather than an independent source of trend premia, capturing patterns already reflected in faster and slower signals.

The assumption of identical expected excess returns ($\mu>0$) and volatilities ($\sigma>0$) across horizons, together with the symmetric Toeplitz structure of the correlation matrix, serves as a normalization device that isolates the contribution of cross-horizon dependence. By holding the marginal distribution of each sleeve constant, the analysis abstracts from differences in signal strength or volatility scaling and focuses exclusively on the role of correlation in shaping optimal portfolio weights. Economically, this specification treats each horizon as an equally efficient transformation of a common underlying trend factor, differing only in timing and interdependence. The symmetry assumption thus enables a clean identification of the diversification mechanism: it reveals that the emergence of a barbell allocation is driven by the structure of horizon correlations rather than by heterogeneity in mean returns or risk. In applied settings, heterogeneity in $\mu$ and $\sigma$ can be reintroduced as a second-order refinement once the structural correlation effect has been characterized.

\paragraph{Practical Implications}
The theoretical insight that redundant horizons are optimally assigned negligible weights has direct implications for the design of multi-horizon trend systems. Weighting schemes that favor a small number of distinct, weakly correlated horizons can achieve higher risk-adjusted returns with lower turnover and estimation error. In contrast, the indiscriminate layering of adjacent horizons—often justified on diversification grounds—can obscure underlying redundancy and dilute overall efficiency. Empirically, we find that horizon allocations concentrated on the short and long ends of the spectrum yield more stable and interpretable exposures, offering a parsimonious representation of the trend premium that remains robust across asset classes and market regimes.

\paragraph{Link to Pre-Decoding Optimization}
In our framework, the allocation across horizons is determined prior to the decoding stage, with the graphical model operating on a weighted aggregation of horizon-specific trend signals. Although the empirical analysis focuses on post-decoding performance, the theoretical implications of the pre-decoding optimization remain directly relevant. 

First, it provides a natural measure of each horizon’s marginal utility in terms of both diversification and contribution to portfolio efficiency. Horizons that are highly correlated with others and exhibit limited specific variance are assigned negligible weights, reflecting their lack of incremental informational content. Second, excluding such redundant horizons before decoding does not compromise replication accuracy. On the contrary, it can enhance stability by reducing the effective dimensionality of the allocation problem and limiting the propagation of estimation noise. 

Taken together, these insights highlight that the economic value of a horizon lies not in its nominal distinctiveness, but in its ability to convey information that is orthogonal to the other components of the trend structure. The optimization process therefore acts as a filtering mechanism, retaining only those horizons that deliver genuine diversification in the return-generating space.
\section{Empirical Methodology}
\label{sec:methodology}

\subsection{Bayesian Graphical Model for Dynamic Horizon Allocation}
\label{sec:empirical_methodology}

Following \citet{ohana2022deepdecoding} and \citet{benhamou2024generativeai}, We model portfolio returns as the result of exposures to asset-specific trend signals. At each date $t$, the portfolio return is written as
\begin{equation*}
r_t = \sum_{i} w_{t,i} \, x_{t,i} + \varepsilon_t,
\end{equation*}
where $w_{t,i}$ denotes the time-varying exposure to asset $i$, $x_{t,i}$ the corresponding trend score, and $\varepsilon_t$ an idiosyncratic error term. The latent weights evolve smoothly through time according to a Gaussian state equation:
\begin{equation*}
w_{t,i} = w_{t-1,i} + \eta_{t,i}, \qquad \eta_{t,i} \sim \mathcal{N}(0, \tau_i^2),
\end{equation*}
which allows the model to capture gradual adjustments in the relative importance of each asset. This specification defines a Bayesian graphical structure in which the observation layer links realized returns to asset-level trend scores, while the latent layer governs the temporal dynamics of the exposures $\{w_{t,i}\}$.

In the current formulation, each asset-level trend score $x_{t,i}$ is itself constructed as an aggregate of horizon-specific signals:
$$
x_{t,i} = \sum_{h \in \mathcal{H}} w_{t,i,h} \, x_{t,i,h}, 
\qquad \sum_{h \in \mathcal{H}} w_{t,i,h} = 1, 
\qquad \mathcal{H} = \{20, 60, 125, 250, 500\}.
$$

Here, $x_{t,i,h}$ represents the trend-following signal of asset $i$ computed at horizon $h$, and $w_{t,i,h}$ is the weight assigned to that horizon within the composite trend score. In the benchmark implementation used within the graphical model, these horizon weights are assumed to be \emph{equally weighted} (i.e., $w_{t,i,h}=1/|\mathcal{H}|$), implying that all horizons contribute identically to $x_{t,i}$.

It is important to distinguish between these two sets of weights:
\begin{itemize}
    \item The graphical model \emph{estimates} the exposures $\{w_{t,i}\}$ that map the trend scores $\{x_{t,i}\}$ into portfolio returns.
    \item The horizon weights $\{w_{t,i,h}\}$, which define how each $x_{t,i}$ is built from horizon-level components $\{x_{t,i,h}\}$, are \emph{inputs} to the graphical model and are fixed \emph{a priori} in the current setting.
\end{itemize}

The objective of our work is precisely to improve upon this naïve equal weighting by determining a set of \emph{optimal, dynamically adjusted} horizon weights $\{w_{t,i,h}\}$. These optimized horizon allocations produce more informative asset-level trend scores $\{x_{t,i}\}$, which then serve as superior inputs to the Bayesian graphical model. In other words, while the graphical model learns the time-varying asset exposures $\{w_{t,i}\}$, our contribution focuses on how to construct each $x_{t,i}$ most effectively by optimally combining its horizon-specific components.

\subsection{Dynamic Allocation Across Trend Horizons}

Equal weighting across horizons implicitly assumes that all assets respond uniformly to trends at different time scales. In practice, this assumption is unrealistic: the persistence and speed of price adjustments vary across assets and market regimes. A uniform allocation therefore overlooks systematic differences in how trends manifest across horizons.

To address this limitation, we estimate asset-specific horizon weights $\{w_{t,i,h}\}$ that optimally combine the horizon-level trend signals $\{x_{t,i,h}\}$. The optimization is conducted prior to the graphical modeling step, ensuring that the resulting aggregate inputs $\{x_{t,i}\}$ capture the most informative mixture of horizons for each asset. Conceptually, this procedure identifies the combination of short-, medium-, and long-term signals that maximizes risk-adjusted performance subject to smoothness and persistence constraints.

Beyond static optimization, we assess the temporal stability of the learned weights. Stable horizon weights indicate that the contribution of each horizon reflects enduring structural or behavioral features of market trends, whereas frequent or abrupt changes suggest regime dependence or transient noise. We evaluate stability over rolling subperiods using three diagnostics: (i) the volatility of the optimized weights, measuring smoothness over time; (ii) their first-order autocorrelation, capturing intertemporal continuity; and (iii) the maximum absolute variation between consecutive periods, reflecting exposure instability. Horizons exhibiting low volatility, high autocorrelation, and limited intertemporal variation are classified as \emph{stable}.

Only when a sufficient proportion of horizons meet these persistence criteria are the estimated weights smoothed using an exponentially weighted moving average. This adaptive smoothing ensures that the final combination of horizons evolves gradually, filtering out transient noise while preserving sensitivity to structural changes. When insufficient stability is detected, the allocation defaults to equal weighting across horizons. 

This two-tiered procedure—combining within-horizon optimization and cross temporal validation—serves to distinguish persistent, information-rich horizons from those whose apparent predictive power is spurious. The resulting trend signals, passed to the decoding stage, are therefore both statistically robust and economically interpretable, capturing trend dynamics that are persistent across time rather than driven by short-lived fluctuations.

\subsection{Controlling for Overfitting Through Persistence Filtering}

A central challenge in empirical asset-pricing models is the risk of overfitting extracting spurious relationships that reflect sample-specific noise rather than persistent, economically meaningful structure. This issue is particularly acute in trend-following strategies, where the underlying return process exhibits limited and time-varying predictability. Blind optimization of horizon weights in such environments can produce allocations that fit historical idiosyncrasies yet fail to generalize out of sample.

To mitigate this risk, we adopt an adaptive framework that conditions the optimization of horizon weights on the persistence of trend signals. The underlying premise is that the informational value of a trend depends not only on its historical performance but also on the temporal stability of its contribution to returns. Assets or markets exhibiting consistent, long-lived trends are more likely to yield reliable predictive content, whereas those with unstable or rapidly reversing patterns are more prone to generating transitory noise.

Operationally, for each asset, we estimate the optimal combination of horizon weights across multiple training subperiods. The temporal evolution of these weights is then assessed through a set of stability diagnostics: (i) the standard deviation of weights across subperiods, (ii) the maximum change between adjacent periods, (iii) first-order autocorrelation, and (iv) monotonicity of directional behavior. These diagnostics jointly capture the smoothness, persistence, and directional coherence of horizon exposures. 

Assets for which the stability indicators exceed a defined persistence threshold are classified as \emph{predictable}. For these assets, the optimized horizon weights are retained for out-of-sample implementation. Conversely, when weight paths display high volatility or low temporal correlation—indicating structural instability—the asset is labeled \emph{non-predictable}, and the allocation defaults to an equal-weighted combination of horizons. 

This persistence-filtering mechanism acts as a safeguard against model overfitting by ensuring that the optimization process privileges signals that are both statistically consistent and economically interpretable. In doing so, it enhances the model’s out-of-sample robustness and prevents the artificial inflation of in-sample performance due to transient, noise-driven relationships.

This framework offers two key advantages. First, it mitigates the risk of overfitting by activating the optimization process only when the underlying trend signal exhibits sufficient persistence and informational strength. This conditional approach limits the likelihood that allocations are driven by sample-specific noise rather than genuine return structure. Second, by restricting the influence of assets with unstable or weakly persistent trends, the model enhances out-of-sample robustness. The resulting allocations are less sensitive to transient fluctuations and more representative of the structural dynamics governing market behavior.

Empirically, the persistence of horizon weights is assessed by partitioning the historical sample into a series of subperiods—for example, sixteen half-year windows within an eight-year training interval. For each subperiod, optimal horizon weights are estimated, and their temporal stability is then evaluated using a set of statistical indicators, including standard deviation, first-order autocorrelation, and maximum inter-window variation. 

When the estimated weights exhibit high stability and cross-temporal consistency, they are retained for subsequent out-of-sample implementation. Conversely, when instability dominates—indicative of a lack of persistent trend structure—the allocation defaults to a conservative benchmark, typically an equal-weighted combination of horizons. 

This persistence-based selection mechanism ensures that optimization efforts concentrate on markets where trend signals convey genuine predictive content. By filtering out noisy or regime-dependent relationships, the approach preserves only those signals with demonstrable informational durability, thereby generating allocations that remain economically meaningful and resilient across changing market conditions.

\section{Empirical Results}\label{sec:results}

\subsection{Universe Description and Cost Framework}\label{sec:cost-summary}
For any backtest, we use three layers of implementation costs (see Table\ref{tab:futures_universe} for contract-level figures):

\begin{itemize}[itemsep=3pt, parsep=0pt, topsep=0pt, partopsep=0pt]
   \item \textbf{Transaction cost (Tx.Cost).} Round-turn execution expense that bundles bid–ask, brokerage, exchange and clearing fees plus a small slippage buffer.
  \item \textbf{Replication (Roll) cost.} Systematic drag when the front-month future is rolled; measured as the 2000–2025 average front-to-next calendar spread.
  \item \textbf{Management fee.} Flat \(( 50\,\,bps) \) per-annum charge on AUM.
\end{itemize}
% Preamble:
% \usepackage{graphicx,array}

\begin{table}[H]
\centering
\resizebox{\linewidth}{!}{%
    \begin{tabular}{@{}ll>{\raggedright\arraybackslash}p{10cm}@{}}
    \toprule
    \textbf{Asset class} & \textbf{Costs (Tx, Roll)} & \textbf{Instruments (exchange)} \\
    \midrule
    Commodities  & 2 / 15 bps & GC (COMEX); CL, NG (NYMEX); CO (ICE); HG (COMEX) \\
    Equity       & 2 / 15 bps & ES, NQ (CME); NK (OSE); SX (Eurex); Z (ICE); EM (CME) \\
    Fixed income & 2 / 10 bps & TU, TY (CBOT); SZ, RX (Eurex); G (ICE); JGB (OSE); XM (ASX) \\
    FX           & 2 / 2 bps  & EUR, JPY, GBP, AUD, CAD (CME) \\
    \bottomrule
    \end{tabular}%
}
\caption{Market futures Universe with Transaction / Roll costs in basis points.}
\label{tab:futures_universe}
\end{table}

\subsection{Replication Framework and Horizon-Level Performance}
The empirical design differs from conventional predictive modeling frameworks in which the objective is to forecast future asset returns directly. Here, the focus is on estimating the optimal allocation of trend weights across horizons to replicate a representative CTA benchmark. Consequently, classical notions of statistical generalization are secondary to the assessment of temporal persistence—specifically, the stability and robustness of the estimated weights across different market environments.

To capture persistent features of market trends, the training windows are deliberately long, spanning eight years—approximately one full economic cycle. This horizon balances the need to observe trend behavior across multiple regimes with the requirement for sufficient sample depth to estimate stable cross-horizon relationships. Each training window is further divided into semiannual subperiods, providing both temporal resolution and statistical reliability. 

This semiannual granularity offers two advantages. First, it produces a sufficiently large number of subperiods—more than fifteen within an eight-year window—allowing for a meaningful assessment of weight stability over time. Second, the six-month duration is long enough to encompass multiple market conditions, including upward, downward, and range-bound phases, thereby reducing the risk that optimization results are dominated by transitory short-term noise.

The validation procedure mirrors this temporal structure. Each six-month test window follows the corresponding training window, with the dataset rolling forward by six months at each iteration. This rolling scheme preserves temporal independence between training and test periods while allowing for continuous updating of estimated horizon weights. The sequential design ensures that model evaluation reflects genuine out-of-sample performance rather than re-optimization on overlapping data.

Empirically, the graphical model begins out-of-sample evaluation in 2009, following an initial calibration period that provides sufficient historical depth for parameter stabilization. The effective backtest thus extends from 2013 onward, corresponding to weights derived from overlapping eight-year estimation windows. This rolling, cycle-length methodology provides a disciplined mechanism for assessing the persistence of horizon contributions and for identifying markets in which trend-following behavior exhibits structural stability rather than random, short-lived fluctuations.

\subsection{Rolling Optimization and Stability-Based Horizon Selection}

The following procedure summarizes the dynamic estimation framework used to update and validate horizon weights over time. The algorithm (described in Algorithm~\ref{alg:rolling_weights}) iteratively re-optimizes the trend-horizon allocation on overlapping training windows and tests its robustness on subsequent validation periods. The objective is to retain only those horizon weights that exhibit sufficient temporal persistence—defined through volatility, autocorrelation, and maximum step-size thresholds—before applying them to the out-of-sample test phase.

\noindent
This procedure ensures that rebalancing decisions are taken only when supported by stable, statistically persistent horizon weights. By combining local optimization with cross-period stability filtering, the model systematically suppresses transient, noise-driven signals while maintaining adaptability to evolving market conditions.

\paragraph{Persistence Rule}\label{stability_test}
Three criteria are used to assess the stability of a series of optimized weights over a training window:

\begin{itemize}
  \item \textbf{Standard deviation of weights.}  
  This must remain below the $40^{\text{th}}$ percentile of the empirical distribution 
  of standard deviations observed across all training windows.  
  This criterion filters out series where weights vary too erratically.

  \item \textbf{First-order autocorrelation.}  
  It must exceed the $60^{\text{th}}$ percentile of the distribution 
  of autocorrelations.  
  This criterion selects series with sufficient persistence, 
  where weights evolve in a consistent manner over time.

  \item \textbf{Maximum variation between consecutive weights.}  
  This variation must remain below the $40^{\text{th}}$ percentile of the empirical distribution 
  of absolute changes between two sub-windows.  
  This criterion filters out series where weights experience significant jumps, 
  which are costly in terms of transactions and difficult to execute in practice.
\end{itemize}

For each asset and each training window, five series of weights are generated, 
corresponding to the five horizons (20, 60, 125, 250, and 500 days).  
A series is considered stable if it satisfies at least two of the three criteria outlined above.  
The set of five weight series is considered globally stable if at least two of the five series are stable.

This approach, based on the $40^{\text{th}}$ and $60^{\text{th}}$ percentiles 
of the corresponding empirical distributions, automatically adapts the thresholds to market conditions.  
When markets are calm, the thresholds become stricter, 
while in periods of high volatility, they loosen.  
This helps to filter out noisy signals while maintaining a sufficient number 
of retained windows to ensure robust out-of-sample validation.

\begin{align}
(1)\quad & \mathrm{std}_T(w_i) \;<\; Q_{40}\Big(\{\mathrm{std}_T(w_j)\}_{j=1}^N\Big) \\[8pt]
(2)\quad & \rho_T^{(1)}(w_i) \;>\; Q_{60}\Big(\{\rho_T^{(1)}(w_j)\}_{j=1}^N\Big) \\[8pt]
(3)\quad & \max_t |w_{i,t} - w_{i,t-1}| \;<\; Q_{40}\Big(\{\max_t |w_{j,t} - w_{j,t-1}|\}_{j=1}^N\Big)
\end{align}

\paragraph{Exponential Moving Average (EMA) Use}

For series considered stable, the weights applied during the test phase 
are not simply computed as the arithmetic average of the sub-windows, 
but as an Exponential Moving Average: 
\[
w^{EMA}_t = \alpha\, w_t + (1-\alpha)\, w^{EMA}_{t-1}, \qquad 0<\alpha \le 1.
\]

This method gives more weight to recent observations, 
while smoothing out short-term fluctuations.  
It thus provides a good balance between stability (noise reduction) 
and adaptability (incorporating new information).

\subsection{Out-of-Sample Results without Persistence Rule: CTA Optimized Trend}
The first approach attempts to optimize the weighting of different assets on each training window, without considering the stability characteristics of the optimized weights series. In the following, this strategy will be referred to as \textit{CTA Optimized Trend}. This method is based on the idea that pure optimization of the weights should maximize the performance in-sample, but it ignores the risk that these weights may reflect temporary fluctuations rather than structural trends.

\paragraph{Utility and iso-curves}
We summarize the trade-off between efficiency and benchmark fit with a Cobb--Douglas utility
\[
U_\alpha = \left(\frac{\text{Return}}{\text{MaxDD}}\right)^{\alpha}\cdot \text{Corr}^{\,1-\alpha}, \qquad \alpha\in(0,1),
\]
and plot iso-curves for $\alpha=0.8$. We report $U_{0.8}$ alongside the individual components.

\begin{figure}[H]
\centering
\includegraphics[width=0.7\textwidth]{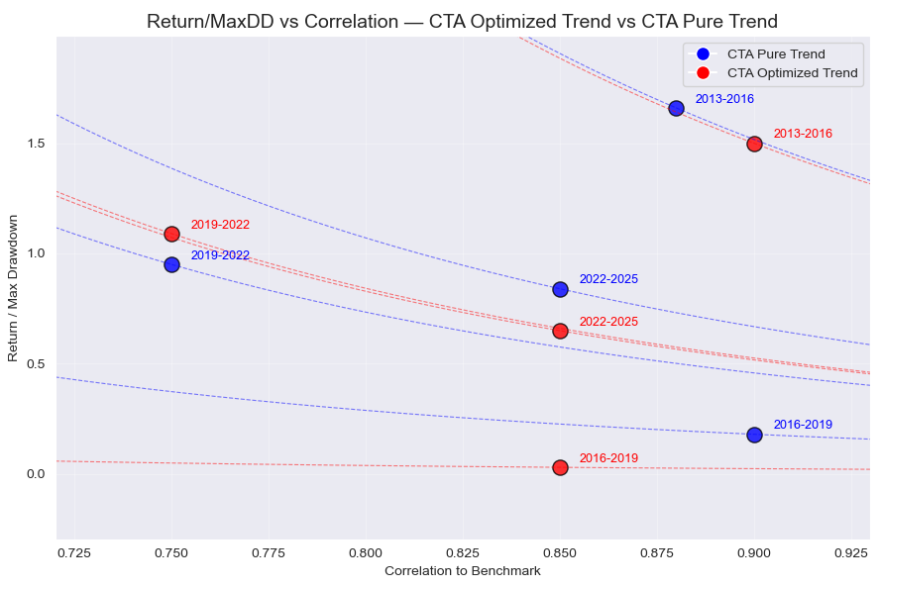}
\caption{Comparison of CTA Optimized Trend and CTA Pure Trend Strategies ($\alpha = 0.8$).}
\label{fig:Comparison_Optimized_vs_Pure_Trend}
\end{figure}

On these three-year rolling windows, and as shown in Figure~\ref{fig:Comparison_Optimized_vs_Pure_Trend}, the optimized allocation strategy (\textit{CTA Optimized Trend}, in red) consistently underperforms the equal-weight benchmark (\textit{CTA Pure Trend}, in blue). Each point in the figure represents a distinct subperiod, spanning from 2013--2016 to 2022--2025, and plots the Return/MaxDD ratio (vertical axis) against the correlation to the benchmark (horizontal axis). The iso-utility curves, parameterized by $\alpha = 0.8$, trace indifference levels between return and correlation, allowing a visual comparison of the two strategies' efficiency in risk-adjusted terms.

A detailed inspection of the data points confirms our finding. During the earliest window (2013–2016), the CTA Pure Trend achieved a Calmar ratio —computed as the Return/MaxDD ratio— close to 1.6 at a benchmark correlation of approximately 90\%, whereas the optimized variant reached only about 1.4 at a similar correlation level.
The next interval, 2016–2019, shows an even sharper divergence: the blue point (Pure Trend) remains around a correlation of 90 \% with a Return/MaxDD ratio slightly above one, while the red point (Optimized Trend) drops below one-half, indicating both lower absolute returns and a higher drawdown profile.
In the 2019–2022 period, when correlation fell to roughly 75 \%, both strategies experienced weaker risk-adjusted outcomes; yet the optimized version still lagged, with a Return/MaxDD ratio close to one, compared with the equal-weight strategy’s slightly superior performance.

Finally, in the most recent 2022–2025 window, both series reconverge around a correlation of 85 \%, but the optimized allocation again remains below its equal-weight counterpart by roughly twenty to thirty basis points in the Return/MaxDD ratio.
These systematic gaps across all four subperiods underline the fundamental instability of the optimized allocation. While the optimizer aims to minimize portfolio variance through horizon-specific weights, the realized performance suggests that these weights overfit to transient in-sample relationships among trend horizons. This results in weights that are ex post suboptimal and ex ante fragile. In contrast, the equal-weight (\textit{CTA Pure Trend}) portfolio, by treating each horizon symmetrically, implicitly regularizes estimation error and mitigates the sensitivity to short-lived covariance shifts.

Economically, the figure demonstrates that the marginal improvement in benchmark correlation obtained by optimization does not translate into superior downside-adjusted performance. The nearly parallel downward-sloping iso-utility curves make this visual: the red trajectory of the optimized strategy remains consistently below the blue one, showing lower utility levels for all observed subperiods. This persistent dominance of the equal-weight allocation aligns with the broader empirical literature emphasizing the instability of mean–variance optimization under parameter uncertainty (see \cite{Michaud1989}; \cite{DeMiguelGarlappiUppal2009}). Taken together, these results highlight that in multi-horizon trend aggregation, structural robustness—rather than statistical precision—drives superior long-term outcomes.

This highlights an important limitation of unconstrained optimization: some market trends may be unstable or changing, and an entirely flexible allocation on each window can produce fragile signals, leading to poor performance when applied to new data.

\subsection{Out-of-Sample Results with Persistence Rule: CTA Dynamic Trend}

\paragraph{Comparison with the CTA Pure Trend Strategy}

\begin{figure}[H]
\centering
\includegraphics[width=0.7\textwidth]{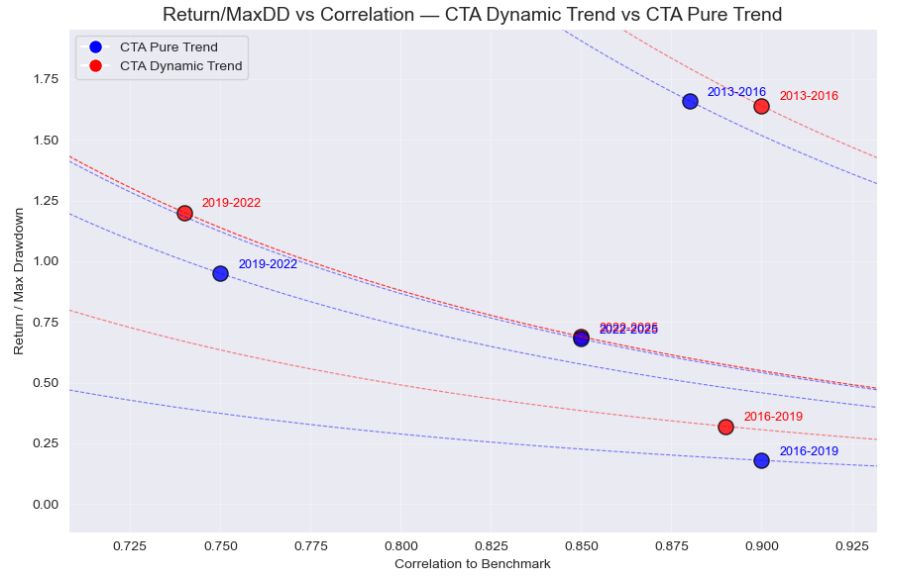}
\caption{Comparison of CTA Dynamic Trend and CTA Pure Trend Strategies ($\alpha = 0.8$).}
\label{fig:Dynamic_vs_Pure_iso}
\end{figure}

\begin{figure}[H]
\centering
\includegraphics[width=0.8\textwidth]{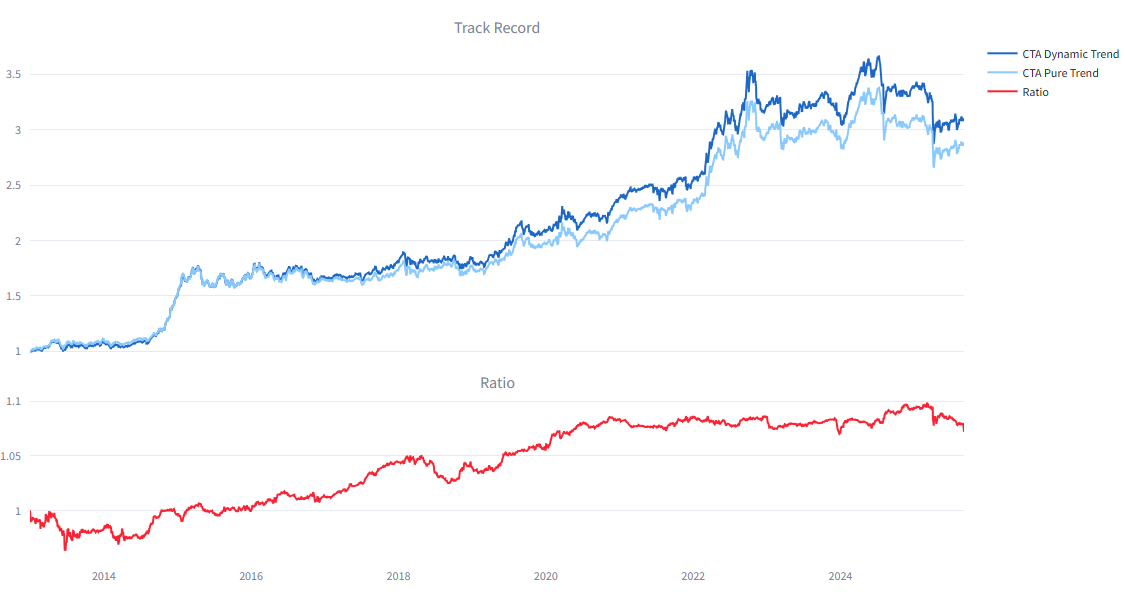}
\caption{Cumulative Performance of CTA Dynamic Trend vs. CTA Pure Trend. The bottom panel shows the performance ratio (Dynamic/Pure).}
\label{fig:Dynamic_vs_Pure_trackrecord}
\end{figure}

As illustrated in Figures~\ref{fig:Dynamic_vs_Pure_iso} and~\ref{fig:Dynamic_vs_Pure_trackrecord}, introducing a persistence rule in the weight allocation (\textit{CTA Dynamic Trend}) materially improves performance stability relative to both the equal-weight (\textit{CTA Pure Trend}) and the fully optimized allocation. The persistence rule enforces gradual weight adjustments across horizons, preventing abrupt reallocations driven by transient covariance shifts. This stabilization mechanism allows the strategy to retain adaptability to changing market regimes while mitigating the effects of overfitting.

Figure~\ref{fig:Dynamic_vs_Pure_iso} summarizes this improvement using three-year rolling windows, plotting the trade-off between Return/Max Drawdown (vertical axis) and correlation to the benchmark (horizontal axis). In each subperiod, the red points (Dynamic Trend) lie on or slightly above the blue ones (Pure Trend), indicating superior or comparable risk-adjusted returns for similar benchmark correlations. The iso-utility contours, corresponding to $\alpha = 0.8$, show that the Dynamic Trend dominates the Pure Trend for all but one period.
A closer examination of the subperiod data provides quantitative support for this dominance. In the early 2013–2016 window, both strategies achieve high efficiency, with Return/MaxDD ratios of 1.66 for the Pure Trend and 1.64 for the Dynamic Trend (Table~\ref{tab:cta-rmdd}). Their benchmark correlations are nearly identical (88 \% versus 90 \%; Table~\ref{tab:cta-corr}). This parity suggests that, in a strongly trending environment, equal weighting performs almost as well as dynamically smoothed allocations.

However, performance diverges sharply during 2016–2019—a period characterized by trend reversals and weaker persistence across asset classes. The Dynamic Trend exhibits a 78\% improvement in the Return/MaxDD ratio (0.32 versus 0.18 for the Pure Trend) and a markedly higher Sharpe ratio (0.24 versus 0.09; Table~\ref{tab:cta-sharpe}), despite a similar benchmark correlation of about 89 \%. This demonstrates the resilience of the persistence rule under low-trend conditions, when frequent reoptimization typically destroys value.

In the subsequent 2019--2022 window, the Dynamic Trend continues to outperform, achieving a Return/MaxDD ratio of 1.20 compared with 0.95 for the Pure Trend. As shown in Figure~\ref{fig:Dynamic_vs_Pure_iso}, both points lie around a benchmark correlation of 0.75, but the Dynamic Trend enjoys a superior position along the utility frontier—delivering roughly 25\% higher efficiency for comparable exposure. The improvement persists into 2022--2025, where Return/MaxDD ratio rises modestly from 0.68 (Pure Trend) to 0.69 (Dynamic Trend), even though overall returns and market trends were weaker during that period. Across the full 2013--2025 horizon, the Dynamic Trend achieves the highest cumulative Sharpe ratio (0.86) and Return/MaxDD ratio (0.74), exceeding both the Pure Trend (0.79 and 0.69, respectively) and the Optimized Trend (0.72 and 0.53).

The cumulative performance plot in Figure~\ref{fig:Dynamic_vs_Pure_trackrecord} corroborates these findings. The top panel shows that the CTA Dynamic Trend (dark blue line) consistently outpaces the CTA Pure Trend (light blue line) over the full sample. The bottom panel plots the cumulative ratio of the two strategies, which remains above 1.0 throughout most of the sample and trends upward steadily after 2016. This indicates sustained outperformance by the dynamic weighting scheme. The ratio peaks around 1.08 in 2023, reflecting an 8\% cumulative gain in performance over the equal-weighted benchmark.

Taken together, the graphical and tabular evidence confirms that persistence-weighted allocation is an effective antidote to overfitting. By introducing inertia in horizon rebalancing, the Dynamic Trend strategy avoids the instability of the fully optimized allocation while capturing more of the long-term convexity inherent in trend-following returns. The quantitative gains reported in Tables~\ref{tab:cta-sharpe}--\ref{tab:cta-corr} demonstrate that this approach delivers superior Sharpe and Return/MaxDD ratios without materially increasing benchmark correlation—an outcome that highlights its efficiency. From a portfolio-construction standpoint, these findings reinforce a broader lesson familiar to the literature on robust optimization (see \cite{Michaud1989}; \cite{DeMiguelGarlappiUppal2009}): statistical simplicity combined with temporal persistence often dominates complex, highly parameterized allocation rules when faced with the realities of noisy and nonstationary financial data.

\begin{table}[H]
\centering
\caption{Sharpe Ratio by Period}
\label{tab:cta-sharpe}
\small
\setlength{\tabcolsep}{2pt}
\begin{tabular}{lccc}
\toprule
Period & Pure Trend & Optimized Trend & Dynamic Trend \\
\midrule
2013--2016 & 1.71 & 1.55 & 1.66 \\
2016--2019 & 0.09 & -0.06 & 0.24 \\
2019--2022 & 0.89 & 0.92 & 1.06 \\
2022--2025 & 0.47 & 0.46 & 0.49 \\
\midrule
2013--2025 & 0.79 & 0.72 & \good{0.86} \\
\bottomrule
\end{tabular}
\end{table}

\begin{table}[H]
\centering
\caption{Return / MaxDD by Period}
\label{tab:cta-rmdd}
\small
\setlength{\tabcolsep}{2pt}
\begin{tabular}{lccc}
\toprule
Period & Pure Trend & Optimized Trend & Dynamic Trend \\
\midrule
2013--2016 & 1.66 & 1.50 & 1.64 \\
2016--2019 & 0.18 & 0.03 & 0.32 \\
2019--2022 & 0.95 & 1.09 & 1.20 \\
2022--2025 & 0.68 & 0.65 & 0.69 \\
\midrule
2013--2025 & 0.69 & 0.53 & \good{0.74} \\
\bottomrule
\end{tabular}
\end{table}

\begin{table}[H]
\centering
\caption{Correlation with Benchmark by Period}
\label{tab:cta-corr}
\small
\setlength{\tabcolsep}{2pt}
\begin{tabular}{lccc}
\toprule
Period & Pure Trend & Optimized Trend & Dynamic Trend \\
\midrule
2013--2016 & 0.88 & 0.90 & 0.90 \\
2016--2019 & 0.90 & 0.85 & 0.89 \\
2019--2022 & 0.75 & 0.75 & 0.74 \\
2022--2025 & 0.85 & 0.84 & 0.85 \\
\midrule
2013--2025 & 0.82 & 0.82 & \good{0.83} \\
\bottomrule
\end{tabular}
\end{table}

\paragraph{Key Takeways}
The results confirm the benefits of dynamically allocating weights across different trend horizons. The chart above shows that, over each three-year sub-period, the \textit{CTA Dynamic Trend} strategy consistently outperforms the equal-weight \textit{CTA Pure Trend} strategy, which does not adjust the weight distribution.

The evolution of the ratio between the two strategies, plotted in red, shows a generally increasing profile above 1. This indicates that adapting the weights to market conditions leads to significant outperformance compared to a static strategy.

The detailed performance tables confirm this trend: the average Sharpe ratio, the Return/MaxDD ratio, and the correlation with the benchmark all highlight an improvement in out-of-sample performance while maintaining consistent correlation with the benchmark. Compared to the optimized version without a persistence rule (\textit{CTA No Medium-Term}), the \textit{CTA Dynamic Trend} strategy demonstrates the importance of reducing overfitting and prioritizing stable weights for persistent horizons.

These observations justify the methodological choice of introducing a persistence rule in the weight optimization process, and suggest that the dynamic allocation strategy is more effective than the purely equal-weighted approach for exploiting market trends across different horizons.

\subsection{Evolution of Optimized Weights}

The weight optimization strategy has consistently underweighted the medium-term horizon across the entire time period. The allocation of weights across the five horizons was as follows: 20\%, 19.5\%, 17\%, 19\%, and 24.5\%, from the shortest to the longest horizon. It is evident that the weights are predominantly concentrated at the extremes, with the shortest and longest horizons receiving a higher share of the total allocation.

This pattern suggests that the medium-term horizon (125 days) is already well-explained by the adjacent horizons (60 and 250 days). As a result, the medium-term horizon appears redundant in the optimization process, contributing little to the overall portfolio performance. Given this observation, it is natural to consider a strategy that excludes the medium-term horizon altogether. By removing the medium-term from the optimization process, we can focus on the more stable, extreme horizons, potentially improving the model’s robustness and performance.

Furthermore, this underweighting of the medium-term horizon is not consistent across all assets. For some assets, the allocation to the medium-term horizon is very low, falling below 10\%, while for others, the allocation is higher, though it never exceeds 22\%. This variability in the medium-term allocation highlights that the exclusion of this horizon might be beneficial in certain cases, where the contribution of the medium-term horizon is minimal or redundant.

\section{Excluding the Medium-Term Horizon}\label{sec:Excluding}

\begin{figure}[H]
\centering
\includegraphics[width=0.85\textwidth]{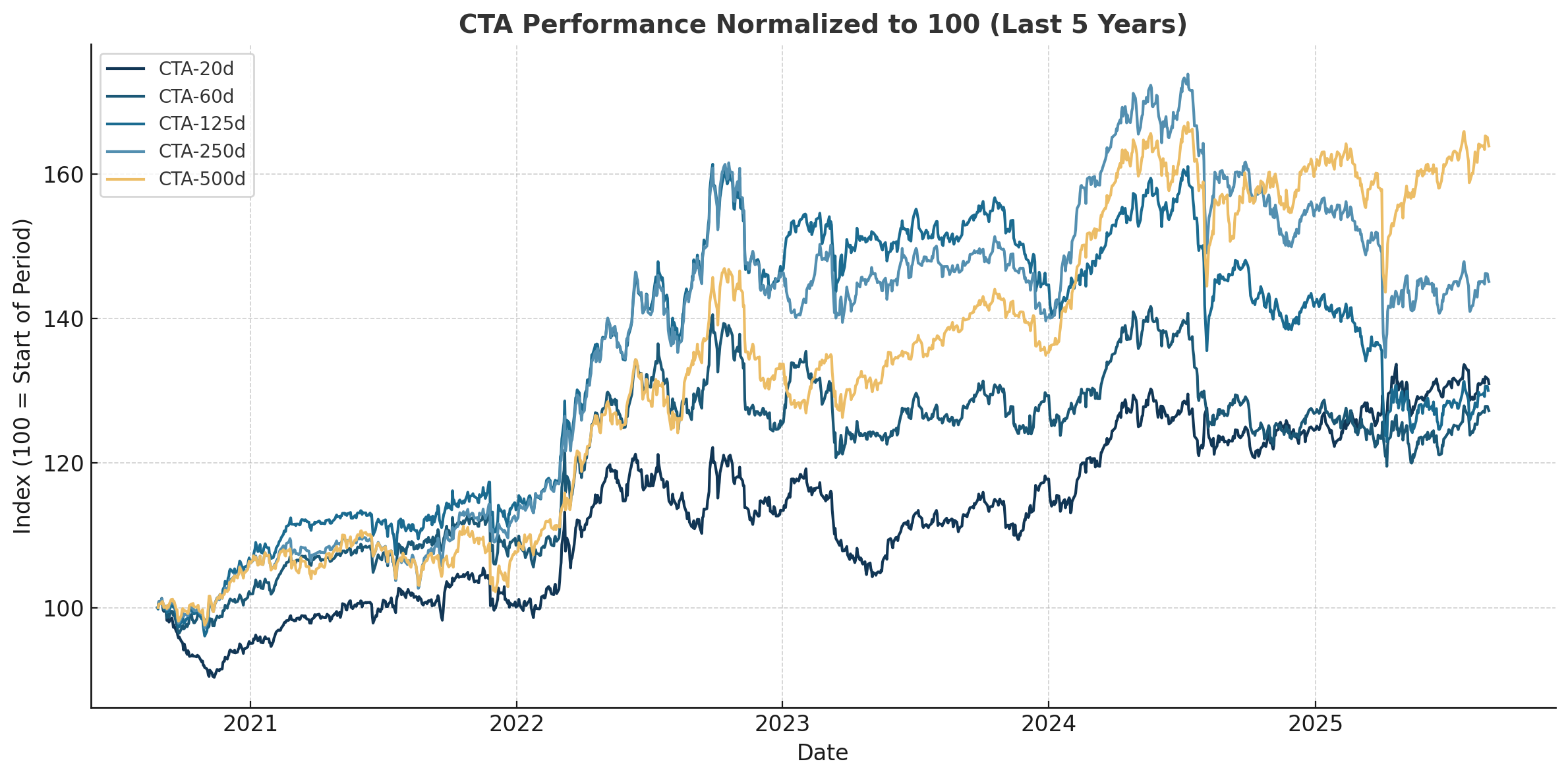}
\caption{Strategy replication based on single-horizon CTA sleeves (20d, 60d, 125d, 250d, and 500d), 2020--2025. Index levels are normalized to 100 at the start of the period.}
\label{fig:trackrecord}
\end{figure}
\paragraph*{Read-through} 
Figure~\ref{fig:trackrecord} provides a comparative visualization of the performance of single-horizon trend-following sleeves, each normalized to an index level of 100 at the start of 2020. The five horizons—20-day, 60-day, 125-day, 250-day, and 500-day—represent progressively slower reactivity to market trends, spanning from short-term tactical trading to long-term directional capture. 

A clear pattern emerges: the 20-day and 500-day sleeves dominate the five-year window, while the 60-day and 125-day horizons persistently underperform. By late 2024, the cumulative index values for the 20d and 500d sleeves reach approximately 160, compared with only about 120 for the 60d and 125d sleeves and around 140 for the 250d horizon. This divergence widens notably after 2022, when macroeconomic volatility and cross-asset dispersion increased, highlighting the structural weakness of medium-term trend signals. 

The 20-day sleeve acts as a diversification anchor: it absorbs market noise and limits drawdowns during reversals, particularly evident in the sharp corrections of early 2022 and late 2023. Its convexity—rapid responsiveness to momentum shocks—preserves portfolio convexity when slower sleeves temporarily decouple from price dynamics. Conversely, the 500-day horizon exhibits strong endurance, steadily compounding through trend persistence in rates, commodities, and FX markets. Its smoother trajectory and smaller sensitivity to short-lived mean reversions enable it to maintain performance even during the choppy markets of 2024–2025.

By contrast, the 60-day and 125-day sleeves—representing medium-term horizons—show repeated episodes of whipsawing. Their trajectories flatten after mid-2022, with limited recovery following market drawdowns. This lagging behavior is consistent with the empirical evidence in Tables~\ref{tab:cta-sharpe}--\ref{tab:cta-rmdd}, where medium-term components contribute disproportionately to risk without adding commensurate returns. Economically, these horizons sit at the intersection between short-term noise and long-term structural trends, resulting in unstable exposure profiles that fail to capitalize on either. Their limited persistence and lack of convexity make them natural candidates for exclusion in multi-horizon aggregation.

The strong post-2023 rebound of the 20d and 500d sleeves—visible in Figure~\ref{fig:trackrecord} as the two uppermost trajectories—confirms that the most effective CTA exposure arises from the combination of the two extremes: the short-term sleeve for reactivity and crisis convexity, and the long-term sleeve for trend persistence and carry capture. Excluding the medium-term components not only enhances overall efficiency but also simplifies the allocation architecture, allowing the dynamic weighting mechanism (as discussed in Section~\ref{fig:Dynamic_vs_Pure_trackrecord}) to focus on structurally differentiated sources of alpha rather than statistically fragile ones.

Taken together, figure~\ref{fig:trackrecord} provides a clear empirical justification for omitting the medium-term horizon. The declining slope of the 60d and 125d trajectories, coupled with their repeated failure to recover after drawdowns, reveals that medium-term signals contribute to noise rather than to risk-adjusted return. This finding aligns with the theoretical predictions from time-series momentum research (\cite{MoskowitzOoiPedersen2012}; \cite{GouldingHarveyMazzoleni2023}), which emphasize that the strongest risk-adjusted returns in trend following arise from either very short or very long lookback windows, not from intermediate ones.

\subsection{Benefit of Multi-Horizons for CTA Replication (2015-2025)}
Over the full decade, the evidence in Tables~\ref{tab:heatmap_full} and~\ref{tab:part1-summary} confirms that performance across horizons is highly heterogeneous, with the long-term and short-term sleeves driving most of the value creation. 

The 500-day strategy emerges as the most efficient single horizon, posting the highest Sharpe ratio (0.47) and the strongest Return/Max DD ratio (0.49) in Table~\ref{tab:part1-summary}. Its annualized return of 7.2\% exceeds that of all other sleeves, while maintaining one of the lowest maximum drawdowns (14.5\%). This efficiency reflects its ability to capture persistent macro trends—most notably the extended directional moves in rates and commodities after 2020—while avoiding excessive turnover. The 250-day horizon performs closely behind, with a Sharpe ratio of 0.42 and a Return/MaxDD ratio of 0.30, reinforcing the notion that slow-moving trend signals provide structural convexity and resilience during volatile macro cycles.

At the opposite end of the spectrum, the 20-day sleeve delivers a modest Sharpe ratio of 0.20 and a return over maximum drawdown ratio of 0.24. However, as shown in Table~\ref{tab:heatmap_full}, its correlation to the benchmark NEIXCTAT (62\%) is the lowest among the horizons, underscoring its role as a diversification engine rather than a benchmark proxy. Its rapid response to short-term market dislocations allows it to preserve convexity during trend reversals—an effect particularly visible in 2022 and early 2024 when long-term sleeves experienced drawdowns. In that sense, the short-term component contributes disproportionately to portfolio asymmetry, despite its standalone efficiency being lower.

The weakest results appear in the 125-day sleeve, which underperforms on all key metrics. With a Sharpe ratio of only 0.21 and a return over maximum drawdown ratio of 0.19 (Table~\ref{tab:part1-summary}), it lags substantially behind both shorter and longer horizons. This horizon’s underperformance is consistent with the “medium-term decay” observed in Figure~\ref{fig:trackrecord}: its signals are too slow to benefit from short-lived momentum bursts yet too fast to capture multi-month macro trends. The 60-day sleeve exhibits a similar, albeit less pronounced, inefficiency, with comparable volatility (10.5\%) but only a slightly better return over maximum drawdown ratio (0.28). Together, these findings reinforce that the medium-term range (60–125 days) adds limited informational value and acts as a drag when integrated into multi-horizon aggregation.

The correlation structure reported in Table~\ref{tab:heatmap_full} provides a complementary perspective. Cross-horizon correlations are highest between the 125d, 250d, and 500d sleeves (0.84--0.90), illustrating the shared exposure of medium- and long-term signals to the same slow-moving market factors. By contrast, the 20d sleeve exhibits only moderate correlation with these longer horizons (44–66\%), highlighting its distinct source of convexity and diversification. Importantly, the All Horizons composite achieves the strongest overall alignment with the CTA benchmark, with a correlation of 0.84 to NEIXCTAT—exceeding any individual sleeve. Among single-horizon models, the 250-day strategy is the closest approximation, with a correlation of 0.81.

From an economic standpoint, this pattern suggests that the benchmark’s behavior reflects a blend of persistent long-term positioning and short-term convexity shocks, rather than medium-term exposure. The high pairwise correlations between long-term sleeves explain their collective stability, while the weak association of short-term horizons with NEIXCTAT supports their complementary diversification role. The \emph{All Horizons} allocation thus benefits from balancing the convexity of short-term signals with the structural persistence of long-term trends—an equilibrium that neither end of the spectrum can achieve in isolation.  

Overall, the decade-long analysis underscores a clear asymmetry in trend efficiency across time horizons. The 500-day sleeve delivers the most stable compounding of returns; the 20-day sleeve contributes meaningful crisis convexity and diversification; and the medium-term sleeves, by contrast, generate limited incremental alpha while amplifying noise. These findings rationalize the exclusion of the medium-term horizon in the dynamic allocation framework presented in Section~\ref{sec:Excluding}, and they provide quantitative evidence that CTA trend replication is best achieved through a bimodal exposure structure—anchored at the short and long ends of the trend spectrum.

\begin{table}[H]
\centering
\caption{Correlation matrix of horizons and benchmarks (2015--2025).}
\label{tab:heatmap_full}
\scriptsize
\setlength{\tabcolsep}{2pt}
\renewcommand{\arraystretch}{1.2}
\begin{tabular}{lccccccc}
\toprule
 & 20d & 60d & 125d & 250d & 500d & All Horizons & NEIXCTAT \\
\midrule
20d          & \cellcolor{blue!30}\textbf{100\%} & \cellcolor{blue!20}83\% & \cellcolor{blue!20}59\% & \cellcolor{blue!8}59\% & \cellcolor{blue!6}44\% & \cellcolor{blue!10}66\% & \cellcolor{blue!10}62\% \\
60d          & \cellcolor{blue!20}83\% & \cellcolor{blue!30}\textbf{100\%} & \cellcolor{blue!15}81\% & \cellcolor{blue!10}60\% & \cellcolor{blue!6}44\% & \cellcolor{blue!15}81\% & \cellcolor{blue!10}69\% \\
125d         & \cellcolor{blue!8}59\%  & \cellcolor{blue!15}81\% & \cellcolor{blue!30}\textbf{100\%} & \cellcolor{blue!20}84\% & \cellcolor{blue!10}67\% & \cellcolor{blue!25}94\% & \cellcolor{blue!15}78\% \\
250d         & \cellcolor{blue!6}44\%  & \cellcolor{blue!10}60\% & \cellcolor{blue!20}84\% & \cellcolor{blue!30}\textbf{100\%} & \cellcolor{blue!25}90\% & \cellcolor{blue!25}90\% & \cellcolor{blue!15}81\% \\
500d         & \cellcolor{blue!4}35\%  & \cellcolor{blue!6}44\%  & \cellcolor{blue!10}67\% & \cellcolor{blue!25}90\% & \cellcolor{blue!30}\textbf{100\%} & \cellcolor{blue!10}78\% & \cellcolor{blue!10}75\% \\
All Horizons & \cellcolor{blue!10}66\% & \cellcolor{blue!15}81\% & \cellcolor{blue!25}94\% & \cellcolor{blue!25}90\% & \cellcolor{blue!10}78\% & \cellcolor{blue!30}\textbf{100\%} & \cellcolor{blue!15}84\% \\
NEIXCTAT     & \cellcolor{blue!10}62\% & \cellcolor{blue!10}69\% & \cellcolor{blue!15}78\% & \cellcolor{blue!15}81\% & \cellcolor{blue!10}75\% & \cellcolor{blue!15}84\% & \cellcolor{blue!30}\textbf{100\%} \\
\bottomrule
\end{tabular}
\end{table}

\begin{table}[H]
\centering
\caption{Performance summary by horizon (2015-08-31 to 2025-08-29).}
\label{tab:part1-summary}
\scriptsize
\setlength{\tabcolsep}{2pt}
\begin{tabular}{lccccccc}
\toprule
 & 20d & 60d & 125d & 250d & 500d & All Horizons & NEIXCTAT \\
Annual Return     & 4.2\%  & 4.4\%  & 4.5\%  & 6.7\%  & 7.2\%  & 6.1\%  & 2.7\%  \\
Vol               & 10.0\% & 10.5\% & 10.9\% & 10.8\% & 10.5\% & 10.6\% & 11.1\% \\
Sharpe Ratio      & 0.20   & 0.21   & 0.21   & 0.42   & \good{0.47} & 0.36   & 0.05   \\
Max DD            & 17.3\% & 15.6\% & 23.8\% & 22.6\% & 14.5\% & 21.6\% & 22.4\% \\
Return over maximum drawdown ratio      & 0.24   & 0.28   & 0.19   & 0.30   & \good{0.49} & 0.28   & 0.12   \\
\bottomrule
\end{tabular}
\end{table}

Furthermore, the hierarchical clustering of horizon strategies, displayed in Figure~\ref{fig:dendrogram}, provides a visual synthesis of the structural relationships between the five CTA sleeves. The dendrogram is based on pairwise distances defined as $1 - \text{correlation}$ over the 2015--2025 period, thereby translating comovement patterns into a geometric representation of similarity.

\begin{figure}[H]
\centering
\includegraphics[width=0.5\textwidth]{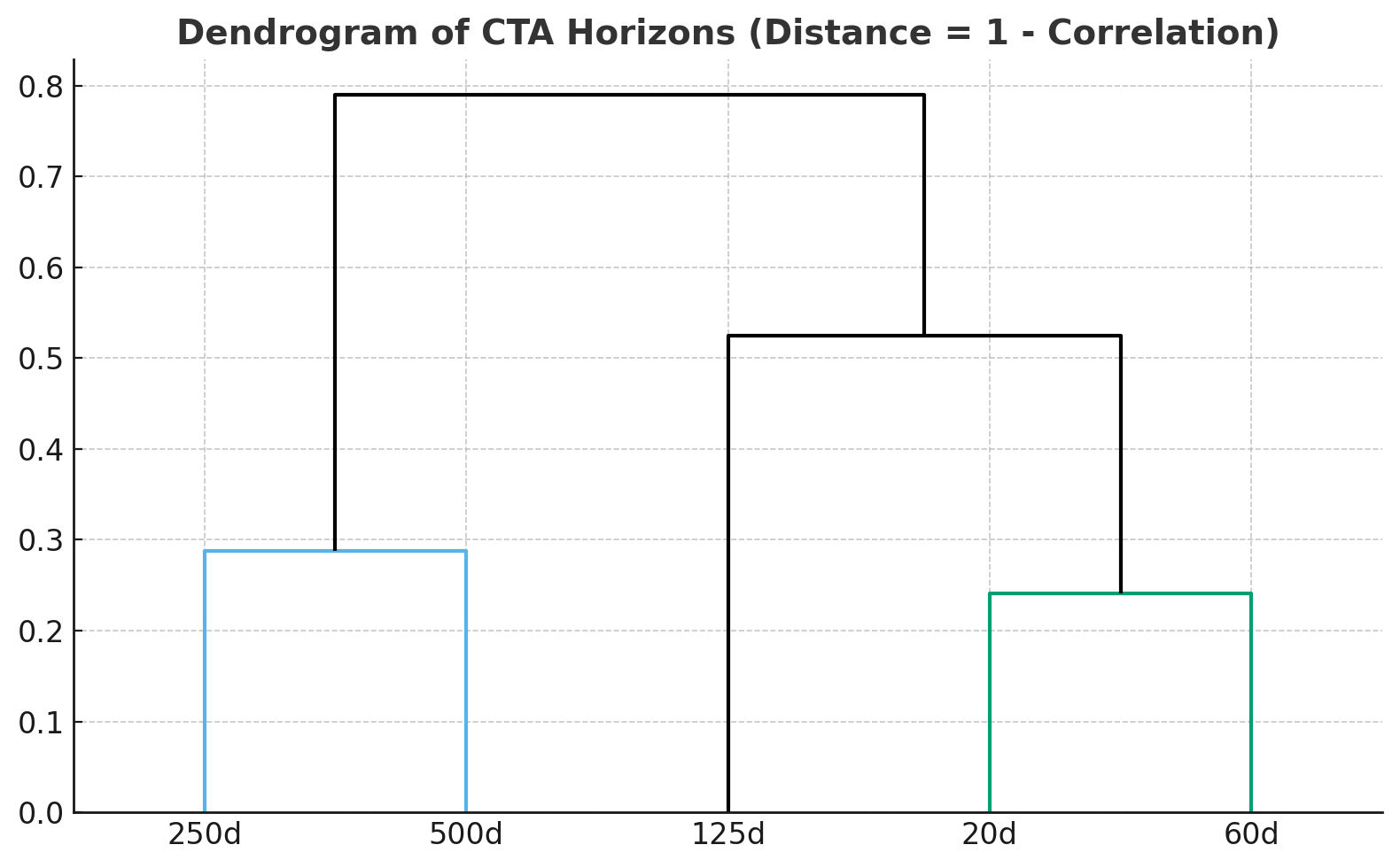}
\caption{Dendrogram of horizon strategies (2015--2025).}
\label{fig:dendrogram}
\end{figure}

As shown in Figure~\ref{fig:dendrogram}, two distinct clusters emerge clearly. The first cluster groups the short-term horizons (20d and 60d), which exhibit a strong degree of co-movement and form a compact subtree with a linkage distance below 0.25, corresponding to pairwise correlations exceeding 0.75 in Table~\ref{tab:heatmap_full}. The second cluster, composed of the long-term horizons (250d and 500d), displays an even tighter relationship, with a linkage distance of approximately 0.30—consistent with their 90\% correlation in the matrix. These two clusters represent economically meaningful dimensions of trend following: short-term reactivity and long-term persistence.

The 125-day sleeve, however, occupies an ambiguous position between the two clusters, joining the dendrogram at a high linkage distance of about 0.55. This position reflects its hybrid exposure profile: it is partially correlated with both short- and long-term signals but insufficiently aligned with either. In other words, the 125d horizon acts as an “intermediate blend” that mirrors characteristics of both extremes without delivering the diversification benefits of the short-term group or the compounding stability of the long-term cluster. 

This structural redundancy explains the empirical inefficiency highlighted earlier in Table~\ref{tab:part1-summary}. The 125d sleeve not only underperforms in terms of Sharpe (0.21) and return over maximum drawdown ratio (0.19) but also fails to diversify the overall portfolio meaningfully—its high correlations with adjacent horizons (0.81 with 60d and 0.84 with 250d) leave limited room for risk reduction. By contrast, the separation between the 20–60d and 250–500d clusters demonstrates that combining the two ends of the trend spectrum yields the most orthogonal, and hence most efficient, exposure structure. 

Economically, this clustering confirms that trend-following efficiency is inherently bimodal. Short-term signals derive value from convexity and responsiveness, while long-term signals capitalize on structural persistence and macro carry effects. The medium-term horizon (125d) contributes neither: it dilutes both effects without improving diversification. The dendrogram in Figure~\ref{fig:dendrogram} thus provides a compact visual argument for excluding the medium-term component from the CTA allocation architecture, reinforcing the empirical evidence presented in Section~\ref{sec:Excluding}.

% ------------------------------------------------
% 2. Ablation of Time Horizons: Which Horizons Hurt, Which Add Value?
% ------------------------------------------------
\subsection{Ablation of Time Horizons: Which Horizons Hurt, Which Add Value?}
To evaluate the marginal contribution of each horizon, we re-run the replication exercise after removing one sleeve at a time and measure performance over disjoint five-year windows. Three complementary objectives are analyzed: (i) the Sharpe ratio, (ii) the  ratio, and (iii) the correlation to the CTA benchmark NEIXCTAT. Table~\ref{tab:ablation-z} summarizes the overall results in standardized (Z-score) form, followed by the detailed period-by-period metrics in Tables~\ref{tab:ablate-sharpe},~\ref{tab:ablate-rmdd}, and~\ref{tab:ablate-corr}.
To evaluate the marginal contribution of each horizon, we re-run the replication exercise after removing one sleeve at a time and measure performance over disjoint five-year windows. Three complementary objectives are analyzed: (i) the Sharpe ratio, (ii) the Return/Max Dratio, and (iii) the correlation to the CTA benchmark NEIXCTAT. Table~\ref{tab:ablation-z} summarizes the overall results in standardized (Z-score) form, followed by the detailed period-by-period metrics in Tables~\ref{tab:ablate-sharpe},~\ref{tab:ablate-rmdd}, and~\ref{tab:ablate-corr}.

\paragraph{Overall ranking}
As shown in Table~\ref{tab:ablation-z}, excluding the 125-day horizon (\emph{No 125}) yields the highest overall Z-score (+0.80), with uniformly positive contributions across all three metrics: +0.84 for Sharpe, +0.96 for return over maximum drawdown ratio, and +0.59 for correlation. In contrast, removing the 500-day horizon (\emph{No 500}) produces the worst outcome (–1.12), driven by simultaneous deterioration in both risk-adjusted return (–0.86) and benchmark correlation (–1.13). Excluding the 20-day sleeve (\emph{No 20}) also hurts performance substantially (–0.38), indicating that the short-term component plays a nontrivial role in preserving convexity and drawdown protection. The two intermediate horizons, \emph{No 60} and \emph{No 125}, move in opposite directions: removing the 60d sleeve has a mild positive effect (+0.37 overall Z-score), while removing the 125d sleeve generates the strongest improvement. This pattern confirms that the medium-term layer is not only inefficient on a standalone basis (as shown in Table~\ref{tab:part1-summary}) but also detrimental when integrated into a diversified trend portfolio.

\paragraph{Sharpe ratios across subperiods}
Table~\ref{tab:ablate-sharpe} highlights the robustness of the “No 125” configuration over time. Its Sharpe ratio exceeds that of the baseline “All Horizons” allocation in three of the four subperiods—2010–2015 (1.41 vs. 1.37), 2020–2025 (0.44 vs. 0.35), and in the full-sample average (0.77 vs. 0.74). The largest relative gain appears during 2020–2025, a period of heightened macro dispersion and trend fragmentation, when excluding the medium-term sleeve improved risk-adjusted returns by more than 25\%. Conversely, the elimination of the 500d horizon consistently reduces Sharpe performance across all subperiods (0.67 full-sample average vs. 0.74 baseline), confirming that long-term exposure is an essential stabilizing factor. The short-term sleeve (20d) also remains valuable: its removal reduces the Sharpe ratio in three out of four periods, notably from 0.35 to 0.33 in 2020–2025. These findings underscore that both extremes of the horizon spectrum—the short and the long—are critical to the portfolio’s convexity–stability balance.

\paragraph{Return/Max Drawdown analysis}
The return over maximum drawdown ratio results in Table~\ref{tab:ablate-rmdd} reinforce the same conclusion. Removing the 125d sleeve yields the strongest improvement across most windows, culminating in a full-sample return over maximum drawdown ratio ratio of 0.52, compared with 0.48 for the All Horizons configuration. The advantage is particularly visible in 2010–2015 (1.75 vs. 1.39), when trend persistence was high and medium-term exposure contributed mostly redundant volatility. In contrast, omitting the 500d horizon consistently degrades the drawdown-adjusted profile, with the ratio dropping to 0.44 over 2005–2025. The removal of the 20d sleeve again produces weaker outcomes (0.45 vs. 0.48), validating its role in cushioning the portfolio during abrupt trend reversals (e.g., 2018 and 2022). Taken together, these results confirm that excluding medium-term exposure enhances efficiency by improving the portfolio’s upside-to-drawdown trade-off.

\paragraph{Correlation to the CTA benchmark}
Finally, Table~\ref{tab:ablate-corr} shows that the “No 125” and “No 60” configurations deliver slightly higher correlations to the NEIXCTAT benchmark than the baseline. Over the full 2005–2025 sample, both variants reach a correlation of 0.84, versus 0.83 for All Horizons. This indicates that the removal of medium-term components not only improves risk-adjusted performance but also strengthens benchmark alignment. The consistency of the “No 125” configuration—higher Sharpe, higher return over maximum drawdown ratio, and stable or improved correlation—suggests that the 125d sleeve is the least economically useful among all horizons. In contrast, the 500d horizon remains indispensable: when excluded, correlation declines to 0.81, underscoring its dominant contribution to structural trend exposure.

\paragraph{Interpretation}
The ablation results across Tables~\ref{tab:ablation-z}–\ref{tab:ablate-corr} confirms the central hypothesis of this paper: that the medium-term horizon (60–125 days) adds little incremental value to multi-horizon CTA replication. Economically, this finding reflects the inherent bimodality of trend-following efficiency: short-term signals capture convex, crisis-driven returns, while long-term signals monetize persistent macro trends. Medium-term signals, by contrast, lie in a “dead zone” of low signal-to-noise ratio, exhibiting neither sufficient persistence nor reactivity. The systematic improvement in Sharpe and return over maximum drawdown ratio metrics following the exclusion of the 125d sleeve thus validates the “no medium-term horizon” principle, both statistically and conceptually, as a defining feature of robust CTA replication design.

\begin{table}[H]
\centering
\caption{Ranking of horizon removals by overall average Z-score (higher is better)}
\label{tab:ablation-z}
\small
\setlength{\tabcolsep}{1pt}
\begin{tabular}{lrrrr}
\toprule
\textbf{Strategy} & 
\shortstack{\textbf{Avg Z-score}\\\textbf{Sharpe}} & 
\shortstack{\textbf{Avg Z-score}\\\textbf{Ret/MaxDD}} & 
\shortstack{\textbf{Avg Z-score}\\\textbf{Correlation}} & 
\shortstack{\textbf{Overall}\\\textbf{Avg Z-score}} \\
\midrule
\good{No 125} & +0.84 & +0.96 & +0.59 & \good{+0.80} \\
\good{No 60}        & +0.40 & +0.21 & +0.49 & \good{+0.37} \\
All Horizons & +0.34 & +0.17 & +0.40 & +0.30 \\
No 250       & +0.07 & +0.19 & -0.16 & +0.03 \\
\bad{No 20}  & -0.28 & -0.67 & -0.19 & \bad{-0.38} \\
\bad{No 500} & -1.36 & -0.86 & -1.13 & \bad{-1.12} \\
\bottomrule
\end{tabular}
\end{table}

\begin{table}[H]
\centering
\caption{Sharpe ratios by period (leave-one-out ablation)}
\label{tab:ablate-sharpe}
\small
\setlength{\tabcolsep}{2pt}
\begin{tabular}{lcccccc}
\toprule
Period & All Horizons & No 20 & No 60 & No 125 & No 250 & No 500 \\
\midrule
2005--2010 & 0.91 & \bad{0.84} & \good{0.94} & 0.90 & 0.89 & 0.87 \\
2010--2015 & 1.37 & 1.32 & 1.28 & \good{1.41} & 1.37 & \bad{1.26} \\
2015--2020 & 0.43 & \good{0.47} & 0.45 & 0.42 & 0.40 & \bad{0.36} \\
2020--2025 & 0.35 & 0.33 & 0.37 & \good{0.44} & 0.37 & \bad{0.28} \\
\midrule
2005--2025 & 0.74 & 0.72 & 0.74 & \good{0.77} & 0.73 & \bad{0.67} \\
\bottomrule
\end{tabular}
\end{table}

\begin{table}[H]
\centering
\caption{Return / MaxDD by period (leave-one-out ablation)}
\label{tab:ablate-rmdd}
\small
\setlength{\tabcolsep}{2pt}
\begin{tabular}{lcccccc}
\toprule
Period & All Horizons & No 20 & No 60 & No 125 & No 250 & No 500 \\
\midrule
2005--2010 & 1.12 & \bad{1.02}& \good{1.15} & 1.13 & 1.13 & 1.14 \\
2010--2015 & 1.39 & \bad{1.17} & 1.23 & \good{1.75} & 1.43 & 1.21 \\
2015--2020 & 0.48 & \good{0.50} & 0.48 & 0.45 & 0.45 & \bad{0.40} \\
2020--2025 & 0.32 & 0.30 & 0.33 & \good{0.39} & 0.34 & \bad{0.28} \\
\midrule
2005--2025 & 0.48 & 0.45 & 0.48 & \good{0.52} & 0.50 & \bad{0.44} \\
\bottomrule
\end{tabular}
\end{table}

\begin{table}[H]
\centering
\caption{Correlation to NEIXCTAT by period (leave-one-out ablation)}
\label{tab:ablate-corr}
\small
\setlength{\tabcolsep}{2pt}
\begin{tabular}{lcccccc}
\toprule
Period & All Horizons & No 20 & No 60 & No 125 & No 250 & No 500 \\
\midrule
2005--2010 & 0.83 & 0.82 & \good{0.84} & 0.83 & 0.83 & 0.82 \\
2010--2015 & 0.85 & 0.84 & 0.85 & \good{0.87} & 0.84 & 0.84 \\
2015--2020 & \good{0.84} & 0.84 & 0.83 & 0.83 & 0.84 & 0.83 \\
2020--2025 & 0.81 & 0.81 & \good{0.83} & \good{0.83} & 0.78 & \bad{0.77} \\
\midrule
2005--2025 & 0.83 & 0.82 & \good{0.84} & \good{0.84} & 0.82 & \bad{0.81} \\
\bottomrule
\end{tabular}
\end{table}

\paragraph{Read-through} The medium band (60--125d) is the consistent drag: removing \emph{125d} improves all three metrics across multiple periods, and removing \emph{60d} often helps. Dropping \emph{500d} is costly in both correlation and risk-adjusted returns, while excluding \emph{20d} mainly erodes diversification. These results corroborate the Z-score ranking and support a multi-horizon blend that de-weights the crowded middle.

\paragraph{Downside Crisis-Adjusted Performance}
Table~\ref{tab:downside_crisis_performance} evaluates the ability of each configuration to generate positive returns during severe equity drawdowns, measured through a \textit{conditional Sharpe ratio}. This ratio is defined as the average monthly return of the strategy in months when the S\&P~500 declines by more than 3\%, divided by the unconditional monthly volatility of the strategy. By construction, it quantifies the extent to which the strategy provides downside protection—or ``crisis alpha''—during market stress events.
Across all leave-one-out configurations, the conditional Sharpe ratios range narrowly between 0.61 and 0.65, indicating that the crisis-hedging capacity of the strategy is structurally robust to horizon exclusion. The All Horizons benchmark achieves a conditional Sharpe of 0.65, identical to the No 500 configuration and slightly above the No 125 variant (0.63). The modest decline observed when removing the 20-day sleeve (0.61) and, to a lesser extent, the 60-day sleeve (0.62), suggests that short-term components play a minor but nontrivial role in capturing fast-moving dislocations. Their presence slightly enhances convexity during sharp risk-off episodes, consistent with their high reactivity and short lookback structure.

Conversely, the elimination of medium-term horizons (60d or 125d) does not impair the strategy’s crisis resilience, confirming that these bands contribute little to protective convexity. The near-identical conditional Sharpe of the No 125 configuration (0.63) relative to the baseline (0.65) implies that excluding medium-term signals neither weakens nor meaningfully strengthens downside performance. The 500-day horizon, on the other hand, appears essential for preserving stability across market regimes: its removal leaves the conditional Sharpe unchanged (0.65), yet, as Tables~\ref{tab:ablate-sharpe} and~\ref{tab:ablate-rmdd} demonstrate, it materially degrades overall efficiency outside of crisis periods.  

From an economic perspective, the findings in Table~\ref{tab:downside_crisis_performance} reinforce the structural bimodality of trend-following performance. The short-term sleeve enhances tactical convexity—delivering positive payoffs during the sharpest equity corrections—while the long-term sleeve ensures capital preservation through persistent exposure to extended macro trends. The medium-term horizons contribute neither. Their exclusion thus preserves the strategy’s “crisis alpha” while improving steady-state efficiency, validating that the removal of the 125-day band strengthens robustness within and outside of stress regimes.

\begin{table}[H]
\centering
\caption{Downside Crisis-Adjusted Performance (Leave-One-Out Ablation)}
\label{tab:downside_crisis_performance}
\small
\setlength{\tabcolsep}{2pt}
\begin{tabular}{lcccccc}
\toprule
Period & All Horizons & No 20 & No 60 & No 125 & No 250 & No 500 \\
\midrule
2005--2025 & 0.65 & 0.61 & 0.62 & 0.63 & 0.62 & 0.65 \\
\bottomrule
\end{tabular}
\end{table}

The scores are nearly identical across specifications, indicating that removing the medium-term horizon does not impair the strategy's downside protection.

\paragraph{Key Takeaways}
Across our tests, the medium-term (60--125 day) band is the weakest link in CTA replication. Removing either horizon from the blend\,---\,especially 125d\,---\,raises Sharpe, improves return over maximum drawdown ratio, and holds or even increases correlation to NEIXCTAT. In contrast, the 20-day sleeve contributes diversification (costly to drop), while removing the 500-day horizon materially hurts both correlation and risk-adjusted returns. For standalone sleeves in 2015--2025, 500-day delivers the highest Sharpe and the shallowest drawdowns; 250-day remains the cleanest single-horizon tracker.

\section{Discussion}
\label{sec:discussion}

The results presented in this paper carry both theoretical and practical implications for the design and interpretation of trend-following strategies.  
They challenge one of the most persistent assumptions in the managed futures literature—namely, that diversification across multiple trend horizons inherently enhances performance and robustness.  
Our findings indicate that this assumption may not hold universally. In particular, the medium-term horizon—traditionally viewed as the “sweet spot” of trend-following—appears to contribute little incremental value once short- and long-term components are accounted for.

\subsection{Revisiting the Concept of Time-Scale Diversification}

The notion of time-scale diversification rests on the idea that market trends unfold across different frequencies, and that combining multiple lookback windows allows for smoother performance across regimes.  
While this intuition is appealing, our empirical and theoretical analyses suggest that excessive layering of adjacent horizons can produce \emph{apparent} diversification that masks underlying redundancy.  
In the mean–variance framework, the medium-term horizon emerges as a convex combination of short- and long-term trends, providing limited orthogonal information.  
The barbell allocation derived from the minimum-variance solution formalizes this observation: when correlations between adjacent horizons are high relative to those between the extremes, the optimal allocation naturally excludes the intermediate horizon.

From a behavioral and market microstructure perspective, this redundancy can be interpreted as a byproduct of how trends form and decay.  
Short-term signals react rapidly to transient price dislocations, capturing local momentum bursts, while long-term signals reflect slow-moving macroeconomic or policy-driven cycles.  
Intermediate horizons tend to overlap with both, reacting neither fast enough to exploit short-term reversals nor slowly enough to capture persistent macro trends.  
Their informational contribution is therefore largely subsumed by the adjacent time scales.

\subsection{Economic Interpretation and Portfolio Implications}

The empirical ablation tests reinforce this theoretical intuition.  
Excluding the 125-day medium-term component improves Sharpe ratios and drawdown-adjusted performance while preserving correlation to the CTA benchmark.  
This improvement suggests that the performance historically attributed to medium-term trends can, in fact, be replicated through a combination of short- and long-term exposures.  
In portfolio terms, this is analogous to replacing an intermediate-duration bond with a mix of short- and long-duration instruments that achieve equivalent duration exposure but offer a superior risk–return trade-off.  

More broadly, these results imply that diversification across time scales should not be pursued mechanically.  
Adding more horizons does not guarantee better risk-adjusted performance if the new components fail to provide statistically independent signals.  
The proliferation of overlapping trend signals can even degrade performance by amplifying turnover, increasing estimation error, and introducing hidden concentration.  
Our results thus support a parsimonious approach to time-scale design, where only those horizons that deliver genuine orthogonal exposure—typically the short and long ends of the spectrum—are retained.

\subsection{Implications for Systematic Investing and Trend-Factor Modeling}

At a broader level, the findings speak to a fundamental principle of systematic investing: robustness arises not from mechanical diversification, but from \emph{structural independence} among the sources of risk and return.  
In the context of trend premia, this means that truly distinct horizons correspond to distinct economic processes—ranging from liquidity-driven microstructure effects at short horizons to macroeconomic re-pricing dynamics at long horizons.  
Intermediate horizons, lacking a clear economic anchor, often reflect statistical blending rather than genuine informational differentiation.

\subsection{Relation to Broader Asset Pricing Debates}

The redundancy of the medium-term horizon echoes a broader debate in empirical asset pricing regarding the proliferation of factors with overlapping exposures.  
Just as many cross-sectional anomalies have been shown to reflect common underlying risk dimensions, our results suggest that time-scale diversification within trend-following strategies may overstate the number of independent sources of return.  
From this perspective, the medium-term horizon represents a “spurious” factor—one that appears distinct in construction but lacks incremental explanatory power once adjacent horizons are considered.  
Recognizing and removing such redundant components can yield cleaner, more interpretable models of trend premia that align with the broader movement toward factor parsimony in quantitative finance.

\subsection{Managerial and Practical Takeaways}
For asset managers and allocators, the results highlight two empirical findings.
First, the performance of trend-following strategies appears to depend less on the number of horizons employed than on the degree of independence across them.
Second, most of the informational and performance contribution within the CTA universe seems to originate from short- and long-term trend signals.
The medium-term horizon, while long considered central in practice, emerges here more as a historical convention than as a structural determinant of performance.

\section{Conclusion}\label{sec:conclusion}
The primary objective of this study was to provide both theoretical and practical contributions to improving the replication of trend-following strategies through the optimization of trend-horizon weighting. The analysis of asset-specific optimized weights revealed persistent patterns across certain instruments, indicating that optimization efforts should be focused on assets exhibiting greater temporal stability.

The optimization, conducted over rolling eight-year training windows with six-month out-of-sample evaluations, highlighted a significant risk of overfitting in markets where weights fluctuate over time. These findings motivated two methodological refinements: first, a focus on assets displaying persistent trends, with more frequent re-estimation of weights on shorter subwindows; second, the systematic exclusion of the medium-term horizon from the allocation, supported by both theoretical considerations and empirical evidence. These refinements led to measurable improvements in both out-of-sample performance and correlation with the benchmark index, evaluated consistently through a Cobb–Douglas utility framework. A key consideration in this work is the sensitivity of the results to the persistence thresholds and training-window lengths, as these hyperparameters naturally influence out-of-sample robustness. Careful selection is important to ensure reliable performance without introducing unintended overfitting. 

% The main limitation of this work lies in the sensitivity of the results to the persistence thresholds and training-window lengths, as these hyperparameters directly influence out-of-sample robustness. Adjusting them ex post to enhance performance would amount to implicit overfitting. Moreover, the instability of optimized weights in highly volatile markets and the exclusion of non-trend market factors limit the universality of the proposed framework. These caveats suggest that perfect replication robustness remains inherently unattainable.

Future work should explore adaptive optimization schemes that dynamically adjust the training horizon based on trend persistence, along with the integration of additional market signals to capture more complex dynamics. Extending this methodology to other asset classes or complementary markets could further enhance its generalizability. Advances in machine learning techniques may also enable faster and more efficient estimation of weights while exploiting cross-horizon dependencies more effectively.

\clearpage
\small
\bibliography{main}

\clearpage
\normalsize

\appendix
\section{Technical proofs}

\subsection{Minimum-Variance Solution}\label{appendix:proof_meanvariance}
We present two complementary proofs of Proposition~\ref{prop:min_variance_portfolio}, illustrating the result from both an analytical and geometric perspective, as well as

\begin{proof}[Proof 1 (KKT/Lagrangian conditions)]
Consider the Lagrangian $\mathcal{L}(w,\lambda)=w^\top \Sigma w + \lambda\,(w^\top \mathbf{1}-1)$.
The first-order optimality condition (stationarity) is
\[
\nabla_w \mathcal{L}(w,\lambda) \;=\; 2\,\Sigma w + \lambda\,\mathbf{1} \;=\; 0
\quad\Longrightarrow\quad
\Sigma w \;=\; -\frac{\lambda}{2}\,\mathbf{1}.
\]
Since $\Sigma$ is invertible, $w = -\tfrac{\lambda}{2}\,\Sigma^{-1}\mathbf{1}$.
Imposing the constraint $w^\top \mathbf{1}=1$ yields
\[
1 \;=\; w^\top \mathbf{1} \;=\; -\frac{\lambda}{2}\,\mathbf{1}^\top \Sigma^{-1}\mathbf{1}
\quad\Longrightarrow\quad
-\frac{\lambda}{2} \;=\; \frac{1}{\mathbf{1}^\top \Sigma^{-1}\mathbf{1}}.
\]
Therefore,
\[
w^\star \;=\; \frac{\Sigma^{-1}\mathbf{1}}{\mathbf{1}^\top \Sigma^{-1}\mathbf{1}}.
\]
Because the objective is strictly convex (positive definite $\Sigma$) and the feasible set is affine and nonempty, this stationary point is the unique global minimizer.
\end{proof}

\begin{proof}[Proof 2 (Geometric/Cauchy--Schwarz via whitening)]
Let $\Sigma = LL^\top$ be a Cholesky factorization with $L$ invertible. Define the change of variables
\[
x \;=\; L^\top w
\quad\text{and}\quad
a \;=\; L^{-1}\mathbf{1}.
\]
Then $w^\top \Sigma w = \|x\|_2^2$ and the constraint becomes
\[
w^\top \mathbf{1}
\;=\;
(L^{-\top}x)^\top \mathbf{1}
\;=\;
x^\top L^{-1}\mathbf{1}
\;=\;
x^\top a
\;=\; 1.
\]
Hence the problem is equivalent to
\[
\min_{x\in\mathbb{R}^H} \ \|x\|_2^2
\quad\text{s.t.}\quad
x^\top a = 1.
\]
By the Cauchy--Schwarz inequality,
\[
1 \;=\; x^\top a \;\le\; \|x\|_2\,\|a\|_2
\quad\Longrightarrow\quad
\|x\|_2^2 \;\ge\; \frac{1}{\|a\|_2^2},
\]
with equality iff $x$ is colinear with $a$, i.e., $x^\star = \dfrac{a}{\|a\|_2^2}$.
Mapping back,
\[
w^\star \;=\; L^{-\top}x^\star
\;=\;
\frac{L^{-\top}a}{\|a\|_2^2}
\;=\;
\frac{L^{-\top}L^{-1}\mathbf{1}}{\mathbf{1}^\top L^{-\top}L^{-1}\mathbf{1}}
\;=\;
\frac{\Sigma^{-1}\mathbf{1}}{\mathbf{1}^\top \Sigma^{-1}\mathbf{1}},
\]
since $L^{-\top}L^{-1}=(LL^\top)^{-1}=\Sigma^{-1}$. Uniqueness follows from strict convexity as above.
\end{proof}

\begin{remark}[Interpretation]
The optimizer is proportional to $\Sigma^{-1}\mathbf{1}$: each component weight adjusts for both its own variance and its covariances with the other horizons. The denominator $\mathbf{1}^\top \Sigma^{-1}\mathbf{1}$ enforces full investment.
\end{remark}

\begin{remark}[Semidefinite $\Sigma$]
If $\Sigma$ is only positive semidefinite but $\mathbf{1}\notin \ker(\Sigma)$, the same expression holds with the Moore--Penrose pseudoinverse: $w^\star=\dfrac{\Sigma^{\dagger}\mathbf{1}}{\mathbf{1}^\top \Sigma^{\dagger}\mathbf{1}}$, yielding the minimum-norm feasible solution.
\end{remark}

\begin{lemma}[Projection interpretation]
Let $\langle u,v\rangle_{\Sigma} := u^\top \Sigma v$ define an inner product on $\mathbb{R}^H$ (with norm $\|u\|_{\Sigma}=\sqrt{u^\top \Sigma u}$) and consider the affine hyperplane
\[
\mathcal{H} \;=\; \{\, w \in \mathbb{R}^H \;:\; w^\top \mathbf{1} = 1 \,\}.
\]
The optimizer $w^\star=\frac{\Sigma^{-1}\mathbf{1}}{\mathbf{1}^\top \Sigma^{-1}\mathbf{1}}$ is the orthogonal projection (in the $\langle\cdot,\cdot\rangle_{\Sigma}$ geometry) of the origin onto $\mathcal{H}$; that is,
\[
w^\star \;=\; \operatorname{argmin}_{w\in\mathcal{H}} \|w\|_{\Sigma}.
\]
\end{lemma}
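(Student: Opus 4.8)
The plan is to recognize that this lemma is, essentially, Proposition~\ref{prop:min_variance_portfolio} dressed in geometric language, and then to supply the one extra verification needed to justify the phrase \emph{orthogonal projection}. Concretely, I would (i) rewrite the $\Sigma$-norm minimization as the variance minimization already solved, (ii) recall the general characterization of orthogonal projection onto an affine subspace, and (iii) check directly that $w^\star$ satisfies that characterization in the $\langle\cdot,\cdot\rangle_\Sigma$ geometry.

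First I would note that, by definition of the $\Sigma$-norm, $\|w\|_\Sigma^2 = w^\top \Sigma w$, so
\[
\operatorname{argmin}_{w\in\mathcal{H}} \|w\|_\Sigma \;=\; \operatorname{argmin}_{w\in\mathcal{H}} \, w^\top \Sigma w ,
\]
which is exactly the program of Proposition~\ref{prop:min_variance_portfolio}. Since $\Sigma \succ 0$ the objective is strictly convex and $\mathcal{H}$ is a nonempty affine hyperplane, so the minimizer exists and is unique, and by that proposition it equals $w^\star = \Sigma^{-1}\mathbf{1} / (\mathbf{1}^\top \Sigma^{-1}\mathbf{1})$. This already gives the displayed $\operatorname{argmin}$ identity in the lemma.

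Next I would invoke the standard fact that, in a finite-dimensional inner-product space, the orthogonal projection of a point $p$ onto an affine subspace $\mathcal{H} = w_0 + V$ (with $V = \{v : v^\top \mathbf{1} = 0\}$ the direction subspace of our hyperplane) is the unique $\hat w \in \mathcal{H}$ with $p - \hat w$ orthogonal to $V$, and that this $\hat w$ coincides with the unique minimizer of $\|w - p\|$ over $w \in \mathcal{H}$. Taking $p = 0$ and the inner product $\langle\cdot,\cdot\rangle_\Sigma$, the projection of the origin is therefore $\operatorname{argmin}_{w\in\mathcal{H}}\|w\|_\Sigma = w^\star$. To keep the argument self-contained, I would also verify the orthogonality condition explicitly: for any $v$ with $v^\top \mathbf{1} = 0$,
\[
\langle 0 - w^\star,\, v \rangle_\Sigma \;=\; -(w^\star)^\top \Sigma v \;=\; -\frac{\mathbf{1}^\top \Sigma^{-1}\Sigma v}{\mathbf{1}^\top \Sigma^{-1}\mathbf{1}} \;=\; -\frac{\mathbf{1}^\top v}{\mathbf{1}^\top \Sigma^{-1}\mathbf{1}} \;=\; 0 ,
\]
and since $w^\star \in \mathcal{H}$, uniqueness of the projection closes the argument.

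There is no substantial obstacle here; the statement follows almost immediately once Proposition~\ref{prop:min_variance_portfolio} is available. The only point requiring mild care is to keep the two orthogonality structures apart: the direction subspace $V$ is the \emph{Euclidean} orthogonal complement of $\mathbf{1}$, whereas the projection of the origin is orthogonal to $V$ in the \emph{$\Sigma$-weighted} inner product. The ``hard part'', such as it is, is simply making sure the identity $\langle w^\star, v\rangle_\Sigma = (\mathbf{1}^\top v)/(\mathbf{1}^\top \Sigma^{-1}\mathbf{1})$ is recorded correctly and applied against the correct subspace.
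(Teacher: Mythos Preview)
Your proposal is correct and follows essentially the same route as the paper: identify the $\Sigma$-norm minimization with the program of Proposition~\ref{prop:min_variance_portfolio}, invoke the standard projection characterization onto an affine subspace, and verify the $\Sigma$-orthogonality of $w^\star$ to the direction space $V=\{d:d^\top\mathbf{1}=0\}$. The only cosmetic difference is that the paper checks orthogonality via the KKT relation $\Sigma w^\star\propto\mathbf{1}$, whereas you plug in the explicit formula for $w^\star$---these are the same computation.
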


\begin{proof}
The minimum-variance program is $\min_{w\in\mathcal{H}} \|w\|_{\Sigma}^2$. In a Hilbert space, the orthogonal projection of a point $z$ onto an affine set $\mathcal{A}=w_0+V$ is the unique $\hat{w}\in\mathcal{A}$ such that
\[
\langle \hat{w}-z, v\rangle_{\Sigma}=0 \quad \text{for all } v\in V.
\]
Here $z=0$ and $\mathcal{A}=\mathcal{H}$ with direction space
\[
V \;=\; \{\, d \in \mathbb{R}^H : d^\top \mathbf{1} = 0 \,\}.
\]
The KKT stationarity for the constrained minimization gives $2\Sigma w^\star + \lambda \mathbf{1}=0$, i.e.
\[
\langle w^\star, d\rangle_{\Sigma} \;=\; (w^\star)^\top \Sigma d \;=\; -\tfrac{\lambda}{2}\,\mathbf{1}^\top d \;=\; 0 \quad \text{for all } d\in V,
\]
which is precisely the orthogonality condition characterizing the projection. Therefore $w^\star$ is the $\Sigma$-orthogonal projection of the origin onto $\mathcal{H}$, and by strict convexity it is unique.
\end{proof}
\clearpage

\subsection{Exclusion of the Intermediate Horizon}\label{proof:barbell}
\begin{proof}
Since $R(\rho,\delta)$ is symmetric, positive definiteness follows from Sylvester’s criterion. 
The first two principal minors satisfy $1>0$ and $1-\rho^2>0$. 
The determinant of $R(\rho,\delta)$ is
\[
\det(R) = (1-\delta)\big[(1+\delta) - 2\rho^2\big],
\]
which is positive if and only if $\rho^2 < (1+\delta)/2$. 

To analyze the minimum-variance portfolio under the constraints 
$w_i \ge 0$ and $w_1 + w_2 + w_3 = 1$, we exploit the symmetry of the problem: 
the short ($T_1$) and long ($T_3$) horizons are statistically identical, 
both correlated with the medium ($T_2$) by $\rho$ 
and with each other by $\delta$. 
Hence, any optimal allocation must satisfy $w_1 = w_3 = w$, leaving
\[
w_2 = 1 - 2w, \qquad \text{with } w \in [0, \tfrac{1}{2}].
\]
Substituting this structure into the variance expression 
$\sigma_p^2 = \sigma^2 w^\top R w$, we obtain
\[
f(w) = w^\top R w 
= w^2 + (1 - 2w)^2 + w^2 
+ 2\rho\big[w(1 - 2w) + (1 - 2w)w\big] 
+ 2\delta w^2.
\]
Simplifying,
\[
f(w) = 1 + 4(\rho - 1)w + (6 + 2\delta - 8\rho)w^2.
\]

We now minimize $f(w)$ for $w \in [0, \tfrac{1}{2}]$. 
The first-order condition $f'(w) = 0$ gives
\[
f'(w) = 4(\rho - 1) + 2(6 + 2\delta - 8\rho)w = 0 
\quad \Rightarrow \quad 
w_0 = \frac{1 - \rho}{3 + \delta - 4\rho}.
\]
Two regimes arise:

\begin{itemize}[leftmargin=1.2em]
  \item \textbf{If } $\rho \ge \tfrac{3 + \delta}{4}$, then $3 + \delta - 4\rho \le 0$ and the quadratic coefficient 
  $(6 + 2\delta - 8\rho) \le 0$. 
  Thus $f(w)$ is concave, and the minimum occurs at the boundary. 
  Since $f(\tfrac{1}{2}) = \tfrac{1 + \delta}{2} < f(0) = 1$, the minimum is achieved at $w^\star = \tfrac{1}{2}$.

  \item \textbf{If } $\tfrac{1 + \delta}{2} \le \rho < \tfrac{3 + \delta}{4}$, 
  the quadratic is convex ($6 + 2\delta - 8\rho > 0$), 
  but the stationary point $w_0 = \frac{1 - \rho}{3 + \delta - 4\rho}$ satisfies 
  $w_0 \ge \tfrac{1}{2}$ for all such $\rho$. 
  Consequently, the constrained minimum again occurs at the boundary $w^\star = \tfrac{1}{2}$.
\end{itemize}

In both cases, the unique minimum-variance allocation is therefore
\[
w^\star = \left(\tfrac{1}{2},\, 0,\, \tfrac{1}{2}\right),
\]
corresponding to the \emph{barbell} portfolio that allocates equally to the short- and long-term horizons while excluding the intermediate one. 
Substituting $w^\star$ into $f(w)$ yields
\[
f(w^\star) = \frac{1 + \delta}{2},
\]
and thus
\[
\sigma_p^{\star 2} = \sigma^2 \tfrac{1 + \delta}{2}, 
\qquad 
\sigma_p^\star = \sigma \sqrt{\tfrac{1 + \delta}{2}}.
\]
\end{proof}

\clearpage
\section{Algorithms}
\subsection{Rolling Estimation Algorithm for the best trend periods}
\begin{algorithm}[H]
\caption{Rolling Estimation and Validation Procedure for Dynamic Horizon Weights}
\label{alg:rolling_weights}
\begin{algorithmic}[1]
\STATE \textbf{Inputs:} 
\begin{itemize}
    \item Return series $R_t$ with columns indexed by horizon \\ $h \in H = \{20, 60, 125, 250, 500\}$; 
    \item training window length of eight years; subwindow length of six months; 
    \item stability thresholds $\{\sigma_{\text{threshold}}, \rho_{\text{threshold}}, \textit{max\_step}\}$; 
    \item smoothing parameter $\alpha$ for exponential weighting.
\end{itemize}

\vspace{2mm}
\WHILE{the end of the sample is not reached}
    \STATE Define a training sample and a subsequent six-month validation sample.
    \FOR{each horizon $h \in H$}
        \STATE Estimate optimal horizon weights $w^{(h)}_{1:T}$ across the $T$ semiannual subwindows within the training sample.
        \STATE Normalize the weights such that $\sum_{t=1}^T w^{(h)}_t = 1$.
        \STATE Compute stability diagnostics for $h$:
        \begin{itemize}[leftmargin=1cm]
            \item $\mathrm{std}\big(w^{(h)}_{1:T}\big)$: intra-window weight volatility;
            \item $\tau_{1}^{(h)}$: first-lag autocorrelation;
            \item $\max_{t} |w^{(h)}_t - w^{(h)}_{t-1}|$: maximum variation between consecutive subwindows.
        \end{itemize}
        \STATE Classify horizon $h$ as \emph{stable} if at least two of the three criteria are satisfied:
        \[
        \begin{cases}
        \mathrm{std}\big(w^{(h)}_{1:T}\big) < \sigma_{\text{threshold}},\\
        \tau_{1}^{(h)} > \rho_{\text{threshold}},\\
        \max_{t} |w^{(h)}_t - w^{(h)}_{t-1}| < \textit{max\_step}.
        \end{cases}
        \]
        \STATE If stable, compute smoothed weight $\hat{w}^{(h)} = \mathrm{EMA}\big(w^{(h)}_{1:T}, \alpha\big)$.
    \ENDFOR
    \STATE If at least two of the five horizons are classified as stable, normalize the corresponding $\hat{w}^{(h)}$ to obtain the final weight vector $w^\star$. Otherwise, revert to equal weighting across horizons.
    \STATE Apply $w^\star$ to the six-month validation sample and roll the training window forward by six months.
\ENDWHILE
\end{algorithmic}
\end{algorithm}

\end{document}